\newcommand{\V}{\mathcal{V}}
\newcommand{\E}{\mathcal{E}}
\newcommand{\N}{\mathcal{N}}
\newcommand{\link}{\mathcal{L}}
\newcommand{\F}{\mathcal{F}}
\newcommand{\U}{\mathcal{U}}
\tikzset{>={Latex[width=2mm,length=2mm]}}
\tikzstyle{vertex}=[circle, draw]
\newtheorem{theorem}{Theorem}
\newtheorem{lemma}{Lemma}
\newtheorem{definition}{Definition}
\newtheorem{assumption}{Assumption}
\tikzstyle{stuff_fill}=[vertex, style=green, fill=black!10]
\newcounter{defcounter}
\newcounter{mycounter}
\begin{document}
    
    \title{Joint Placement and Allocation of VNF Nodes with Budget and Capacity Constraints}

\author{Gamal Sallam,~\IEEEmembership{Student Member,~IEEE}~and~Bo Ji,~\IEEEmembership{Senior Member,~IEEE}
\thanks{Gamal Sallam (tug43066@temple.edu) is with the Department of Computer and Information Sciences, Temple University, Philadelphia, PA 19122, USA; Bo Ji (boji@vt.edu) is with the Department of Computer Science, Virginia Tech, Blacksburg, VA 24061, USA. Bo Ji is the corresponding author.
}
\thanks{A preliminary version of this work has been presented at IEEE INFOCOM 2019. This work was supported in part by the NSF under Grant CNS-1651947.}
}
    
    
    \maketitle

    \IEEEpeerreviewmaketitle
    \begin{abstract}
        With the advent of Network Function Virtualization (NFV), network services that traditionally run on proprietary dedicated hardware can now be realized using Virtual Network Functions (VNFs) that are hosted on general-purpose commodity hardware. This new network paradigm offers a great flexibility to Internet service providers (ISPs) for efficiently operating their networks (collecting network statistics, enforcing management policies, etc.). However, introducing NFV requires an investment to deploy VNFs at certain network nodes (called VNF-nodes), which has to account for practical constraints such as the deployment budget and the VNF-node capacity. To that end, it is important to design a joint VNF-nodes placement and capacity allocation algorithm that can maximize the total amount of network flows that are fully processed by the VNF-nodes while respecting such practical constraints. In contrast to most prior work that often neglects either the budget constraint or the capacity constraint, we explicitly consider both of them. We prove that accounting for these constraints introduces several new challenges. Specifically, we prove that the studied problem is not only NP-hard but also non-submodular. To address these challenges, we introduce a novel relaxation method such that the objective function of the relaxed placement subproblem becomes submodular. Leveraging this useful submodular property, we propose two algorithms that achieve an approximation ratio of $\frac{1}{2}(1-1/e)$ and $\frac{1}{3}(1-1/e)$ for the original non-relaxed problem, respectively. Finally, we corroborate the effectiveness of the proposed algorithms through extensive evaluations using trace-driven simulations. 
    \end{abstract}
    \section{Introduction}
    The advent of Network Function Virtualization (NFV) has made it easier for Internet service providers (ISPs) to employ various types of functionalities in their networks. NFV requires the replacement of  network functions that traditionally run on proprietary dedicated hardware with software modules, called Virtual Network Functions (VNFs), which run on general-purpose commodity hardware \cite{chiosi2012network}. A wide variety of network functions (firewalls, intrusion detection systems, WAN optimizers, etc.) can be applied to flows passing through network nodes that host VNFs (called VNF-nodes). A flow must be fully processed at one or multiple VNF-nodes so that the potential benefits introduced by NFV can be harnessed \cite{poularakis2017one}. The new network paradigm enabled by NFV not only offers a great flexibility of introducing new network functions, but it also reduces capital and operational expenditure. Therefore, major ISPs have already started the process of transforming their technologies and operations to support NFV \cite{Amdocs_whitepaper}. 
    
    However, such moves often take place in multiple stages due to the budget limit; in each stage, only a subset of nodes can be selected for deploying/placing VNFs. Moreover, VNF instances typically have a limited capacity, which is shared for processing multiple passing flows. Therefore, given a deployment budget  and capacity limit, it is of critical importance to choose a best subset of nodes to become VNF-nodes  and to determine the optimal capacity allocation so as to maximize the amount of network traffic passing through them. 
    
    In contrast to most prior work that often neglects either the budget constraint (e.g., \cite{ sang2017provably, lukovszki2018approximate}) or the capacity constraint (e.g., \cite{poularakis2017one}), we explicitly consider both constraints and \emph{ formulate a joint problem of VNF-nodes placement and capacity allocation (VPCA)}. The VPCA problem has two main components: VNF-node placement and VNF-node capacity allocation, which are tightly coupled with each other. That is, deciding where to place the VNF-nodes depends on how the capacity of the VNF-nodes will be allocated; determining an optimal capacity allocation apparently depends on where the VNF-nodes are placed. The challenge posed by this problem is two-fold.  First, \emph{the placement and capacity allocation subproblems are both NP-hard}. Second, even if we assume that there is an oracle that can optimally solve the capacity allocation subproblem, \emph{the placement subproblem is non-submodular} (a property that generally leads to efficient solutions for similar problems). This is in stark contrast to the previously studied problem without the capacity constraint \cite{poularakis2017one}, which has been shown to be submodular and can be approximately solved using efficient greedy algorithms.  
    
    To that end, we propose a new \emph{framework} that integrates a \emph{decomposition approach} with a \emph{novel relaxation method}, enabling us to design efficient algorithms with constant approximation ratios for the studied VPCA problem. We summarize our key contributions as follows. 
    \begin{list}{\labelitemi}{\leftmargin=1em \itemindent=-0.5em \itemsep=.2em}
        \item  First, we formulate the VPCA problem with budget and capacity constraints as an Integer Linear Program (ILP). Then, we provide an in-depth discussion about the new challenges introduced by the budget and capacity constraints. Specifically, we show that the placement and capacity allocation subproblems are both NP-hard. Further, we show that the objective function of the placement subproblem is not submodular.
        \item To address these challenges, we relax the requirement of fully processed flows and allow partially processed flows to be counted. This simple relaxation enables us to prove that the relaxed placement subproblem is submodular based on a \emph{ novel network flow reformulation} of the relaxed capacity allocation subproblem. Leveraging this useful submodular property, we design two efficient algorithms that achieve an approximation ratio of $\frac{1}{2}(1-1/e)$ and $\frac{1}{3}(1-1/e)$ for the original (non-relaxed) VPCA problem, respectively. To the best of our knowledge, \emph{this is the first work that exploits this type of relaxation method to solve a non-submodular optimization problem with provable performance guarantees.}
        
        \item Finally, we evaluate the performance of the proposed algorithms using trace-driven simulations. The simulation results show that the proposed algorithms perform very closely to the optimal solution obtained from an ILP solver and better than another algorithm that iteratively selects the node with the highest volume of traffic traversing it \cite{hong2016incremental}.
    \end{list}
    
    The rest of the paper is organized as follows. First, we position our work compared to related work in Section \ref{sec:related}. Next, we describe the system model and problem formulation in Section \ref{sec:system} and discuss the challenges of the VPCA problem in Section \ref{sec:hardness}. Then, we introduce the VPCA relaxation and reformulation in Section \ref{sec:relaxationAndReformulation} and the proposed algorithms in Section \ref{sec:VPCAALgorithm}. Then, we present the numerical results in Section \ref{sec:evaluation}. Finally, we conclude the paper and discuss future work in Section \ref{sec:conclusion}. 
    \section{Related Work}
    \label{sec:related}
    There has been a large body of work that studies the placement problem in different contexts such as NFV, SDN, and edge cloud computing. In NFV, a placement is usually considered at a scale of VNF instances, i.e., where and how many instances of each network function should be placed and allocated \cite{sang2017provably, shi2018competitive, feng2018optimal, Feng2017}. Different objectives are considered in each of them. The problem of how to meet the demand from all of the flows with a minimum cost (e.g., in terms of the number of instantiated instances) is considered in \cite{sang2017provably,chen2018virtual}. An extension of such work considers the setting where each flow must traverse a chain of network functions, instead of just one function, along a given route~\cite{tomassilli2018provably}. A similar problem is also considered in \cite{shi2018competitive, lukovszki2015online} but for an online setting where flows arrive and leave in an online fashion.  The work in~\cite{Feng2017} addresses the joint problem of VNF service chain placement and routing with the objective of minimizing total communication and computation resource cost. A dynamic version of this problem is considered in~\cite{feng2018optimal}, where the goal is to ensure network stability while minimizing resource cost. Also, in \cite{sallam2018shortest}, the authors consider the placement of a minimum number of nodes to achieve the original maximum flow under a given service function chaining constraint. 
    
    Note that the process of transitioning to NFV typically has two phases: the planning phase and the production phase. The planning phase is concerned about deciding where to introduce NFV to efficiently utilize the limited budget. Since this phase takes place before the actual deployment of the VNF-nodes, one can use historical traces to project flow demands across the network. In the production phase, one can optimize flow admission and routing schemes decisions to efficiently utilize the available resources (e.g., \cite{lukovszki2015online}). In this work, we are mainly focused on the planning phase with budget and capacity constraints and assume that flow routes are fixed. 
    
    There are several studies that are highly relevant to our work. In \cite{poularakis2017one}, the authors consider the selection of a set of nodes to upgrade to SDN. By assuming that the SDN-nodes have an infinite capacity, they show that the problem is submodular. However, we show that with a capacity constraint (which is typically the case in practice), the problem becomes non-submodular. In addition, due to the capacity constraint, only a subset of flows traversing a VNF-node can be processed. Therefore, capacity allocation becomes a crucial component of the joint problem we consider. 
    Similar to \cite{He2018It}, which considers joint placement and scheduling in the edge clouds, we consider a new architecture with stateless network functions (see, e.g., \cite{kablan2017stateless}), which enables a fractional flow assignment over fixed routes. Similar to our problem, the problem considered in \cite{He2018It} is not submodular in general. While they can prove submodularity and provide an approximation algorithm for a special case, they develop a heuristic algorithm only for the general case.
    In contrast, we develop approximation algorithms for the general problem we consider. Specifically, we propose a new framework that enables us to address the challenge of non-submodularity and develop approximation algorithms for the general case. In \cite{lukovszki2018approximate}, instead of considering a budget constraint, the authors aim to minimize the number of deployed middleboxes subject to the constraint that the length of the shortest path of any flow cannot exceed a certain threshold. They consider homogeneous flow demands (i.e., each flow requires one unit of processing capacity), which makes their capacity allocation subproblem solvable in polynomial time. In contrast, we consider heterogeneous flow demands, which renders the problem NP-hard. Moreover, while the objective function of their placement problem is submodular, ours is non-submodular. Different from the aforementioned studies, we consider both capacity and budget constraints that are of practical importance. Considering these practical constraints introduces new challenges discussed above.
    
    In \cite{lukovszki2018approximate}, the authors also extend their study to the case of heterogeneous flow demands. However, their proposed algorithm achieves a bicriteria approximation ratio only; specifically, the node capacities may be violated by a constant factor. The work of \cite{poularakis2020service} considers the problem of joint service placement and request routing in Mobile Edge Computing networks. This work shows that the considered problem generalizes other well-studied problems, including that of~\cite{lukovszki2018approximate} when flow demands are homogeneous. However, it is unclear whether the generalization applies to the case of heterogeneous flow demands and VNF-node deployment costs, which become relevant in the planing phase addressed in this work. Moreover, the proposed algorithm is based on randomized rounding and achieves a probabilistic bicriteria approximation ratio. Randomized rounding has also been employed to design bicriteria approximation algorithms for Virtual Network Embedding (see, e.g., \cite{rost2019virtual,nemeth2020cost}). Being common in the studies on Virtual Network Embedding, a pre-determined VNF-node placement is often assumed. In contrast to these studies, we consider the problem of joint VNF-node placement and capacity allocation and present algorithms that achieve constant approximation ratios and do not violate any constraint.
    
    The concept of submodularity has been extensively studied in literature, starting with the seminal work in \cite{nemhauser1981maximizing}. Submodular set functions exhibit the diminishing return property, which means that the value of adding an item to a set decreases as the size of the set increases. For problems with submodular objective function, several algorithms can be utilized to solve them efficiently \cite{nemhauser1981maximizing, khuller1999budgeted}. 
    For non-submodular problems, several useful techniques, including weak submodularity \cite{das2011submodular,chen2017weakly} and supermodularity \cite{feldman2014constrained}, have been developed to address non-submodularity. As far as weak submodularity is concerned, a parameter $\gamma \in (0,1]$ is used to quantify how far the objective function is from being submodular. In such settings, approximation results have been established when cardinality constraint \cite{das2011submodular} or general matroid constraint \cite{chen2017weakly} is considered; the guaranteed approximation ratios deteriorate gracefully as $\gamma$ moves away from 1. Since weak submodularity is a relatively new concept, the approximation results remain unexplored when the constraint is of other form, such as knapsack. There is another similar concept called supermodularity \cite{feldman2014constrained}; the supermodular degree is proposed to measure the deviation from submodularity. The work in \cite{feldman2014constrained} introduces an algorithm that can be shown to be effective when the objective function has a small supermodular degree.  Different from these techniques, we propose a novel framework that enables us to design efficient algorithms with constant approximation ratios for the non-submodular problem we consider; the achieved approximation ratios do not depend on problem parameters, such as $\gamma$ in weak submodularity and the supermodular degree. 
    
    Recently, we have also extended our framework to more general settings with multiple network functions and multiple types of resources \cite{sallam2019placement}. There are additional challenges in such general settings: it is unclear whether the relaxed placement subproblem is still submodular; the capacity allocation subproblem becomes a multi-dimensional generalization of the generalized assignment problem with assignment restrictions, which is much more challenging. Due to these new challenges, different algorithms and techniques are developed, and the derived approximation ratios are not constant in general.
    
    \section{System Model and Problem Formulation}
    \label{sec:system}
    We consider a network graph $G=(\V, \E)$, where $\V$ is the set of nodes, with $V = |\V|$, and $\E$ is the set of edges connecting nodes in $G$. We have a set of flows $\F$, with $F = |\F|$. We use $\lambda_f$ to denote the traffic rate of flow $f \in \F$. A node is called a VNF-node if it is able to support VNFs. Since ISPs have a limited budget to deploy VNFs in their networks, they can only choose a subset of nodes $\U \subseteq \V$ to become VNF-nodes. We consider architectures with stateless network functions (e.g., \cite{kablan2017stateless}). A flow's state is stored in a data store; no matter where the flow is processed, the state can be accessed from the data store. Therefore, the traffic rate $\lambda_{f}$ of each flow can be split and processed at multiple VNF-nodes. We use $\lambda_{f}^v$ to denote the portion of flow $f$ that is assigned to VNF-node $v$ and use $\boldsymbol{\lambda} \in \mathcal{R}^{F \times V}$ to denote the assignment matrix. 
    
    We assume that the process of transitioning to NFV goes through two main phases: the planning phase and the production phase. The planning phase is concerned about deciding where to introduce NFV to efficiently utilize the limited budget. Since this phase takes place before the actual deployment of the VNF-nodes, we assume that we can utilize historical traces to project flow demands across the network. Then, in the production phase, we can employ online flow admission and routing schemes (e.g., \cite{lukovszki2015online}) to dynamically adjust flow routing to efficiently utilize the available resources. In this work, we are mainly focused on the planning phase. Therefore, we assume that the traffic of flow $f$ will be sent along a predetermined path, which can be obtained from historical traces. We use $\V_f$ to denote the set of nodes along this path. Alternatively, the nodes along the predetermined path of a flow can also be viewed as potential locations at which the flow will be processed, and routing between these nodes can be dynamically computed in an online fashion. We use $\F_\U$ to denote the set of all flows whose path has one or more nodes in a given set $\U$, i.e., $\F_{\U} = \{f \in \F ~|~ \V_f \cap \U \neq \emptyset\}$.
    
    As we mentioned earlier, the benefits of processed traffic can be harnessed from fully processed flows, i.e., flows that have all of their traffic processed at VNF-nodes. Hence, when a flow traverses VNF-nodes and there is a sufficient capacity on these VNF-nodes to process all of its rate, i.e., $\sum_{v \in \V_f \cap \U} \lambda_{f}^v \geq \lambda_{f}$, then the flow is counted as a processed flow\footnote{For some flows, we can gain benefits by even processing a fraction of its traffic. In such cases, we have a mix of flows that need to be fully processed and that can be partially processed. The key challenges remain, and the proposed algorithms can be applicable with minimal modifications.}. Therefore, the total processed traffic can be expressed as follows:
    \begin{equation}
    \label{eq:objectiveJ_1}
    {J_1}(\U, \boldsymbol{\lambda}) \triangleq \sum_{f\in \F} \lambda_{f} \boldsymbol{1}_{\{ \sum_{v \in \V_f \cap \U} \lambda_{f}^v \geq \lambda_{f} \}},
    \end{equation}
    where $\boldsymbol{1}_{\{.\}}$ is the indicator function. Note that each VNF-node $v$ has a limited processing capacity, denoted by $c_v$. Hence, the total traffic rate assigned to a node should satisfy the following capacity constraint:
    \begin{equation}
    \begin{cases}
    \sum_{f \in \F}  \lambda_{f}^v  \leq c_v,  \, & \forall v \in \U \label{eq:nodecapacity},\\
    \lambda_{f}^v = 0,              & \forall f \in \F \text{ and } \forall v \notin \U.
    \end{cases}
    \end{equation}
    We assume that the largest traffic rate of any flow is no larger than the smallest processing capacity of any node\footnote{While some studies (e.g., \cite{sang2017provably}) consider the placement of VNF instances and allow the flow rate to be larger than the capacity of a VNF instance, we consider the problem of placing VNF-nodes, each of which can host multiple VNF instances. Therefore, it is reasonable to assume that the capacity of such a VNF-node is larger than the rate of any flow.}. Also, we consider a limited budget, denoted by $B$, and require that the total cost of introducing VNF-nodes do not exceed $B$. We use $b_v$ to denote the cost of making node $v$ a VNF-node, which includes hardware and/or software installation cost and may also depend on the processing capacity. Hence, the total cost of VNF-nodes should satisfy the following budget constraint:
    \begin{equation}
    \sum_{v \in \U} b_v \leq B. \label{eq:budget}
    \end{equation}
    
    The above budget constraint limits the number of nodes that can become VNF-nodes, and we may only have a subset of flows that traverse some VNF-nodes. Accounting for the above deployment budget and VNF capacity constraints, we consider a joint problem of VNF-nodes placement and capacity allocation (VPCA). The objective is to choose a best subset of nodes to become VNF-nodes and optimally allocate their capacities so as to maximize the total amount of fully processed traffic. We provide the mathematical formulation of the VPCA problem in the following:
    \begin{equation}\tag{$P1$}
    \label{eq:mainProblem}
    \begin{aligned}
    & \underset{\U \subseteq \V, \boldsymbol{\lambda}}{\text{maximize}} \quad  {J_1}(\U, \boldsymbol{\lambda})\\
    & \text{subject to} \quad \eqref{eq:nodecapacity}, \eqref{eq:budget}. 
    \end{aligned}
    \end{equation}

    \section{Challenges of VPCA}
    \label{sec:hardness}
    Here, we will identify the unique challenges of the VPCA problem formulated in \eqref{eq:mainProblem}. We first decompose the VPCA problem into two subproblems: 1) placement: how to select a subset of nodes to become VNF-nodes and 2) capacity allocation: for a given set of VNF-nodes with fixed capacities, how to divide their capacity for processing a subset of flows. Then, we prove that both subproblems are NP-hard and that the placement subproblem is non-submodular. This is very different from similar problems neglecting the capacity constraint \eqref{eq:nodecapacity} \cite{poularakis2017one}, which have been shown to be submodular and can be approximately solved. 
    
    \subsection{NP-hardness}
    \label{subsec:NPhardness}
    First, we present the formulations of the two subproblems. We start with the allocation subproblem because it will be used in the placement subproblem. For a given set of VNF-nodes $\U \subseteq \V$, let $J^{\U}_2(\boldsymbol{\lambda})$ denote the total amount of fully processed traffic under flow assignment $\boldsymbol{\lambda}$. Note that $J^{\U}_2(\boldsymbol{\lambda})$ has the same expression as that of $J_1(\U, \boldsymbol{\lambda})$ in Eq. \eqref{eq:objectiveJ_1}. The superscript $\U$ of $J^{\U}_2(\boldsymbol{\lambda})$ is to indicate that it is associated with a given set of VNF-nodes $\U$. Then, the capacity allocation subproblem for a given set of VNF-nodes $\U$ can be formulated as
    \begin{equation}\tag{$P2$}
    \label{eq:allocation}
    \begin{aligned}
    & \underset{\boldsymbol{\lambda}: \eqref{eq:nodecapacity}~\text{is satisfied}} {\text{maximize}} \quad  {J_2^\U}( \boldsymbol{\lambda}). 
    \end{aligned}
    \end{equation}
    Let $J_3(\U) \triangleq \max_{\boldsymbol{\lambda}: \eqref{eq:nodecapacity}~\text{is satisfied}} J^{\U}_2 (\boldsymbol{\lambda})$ denote the optimal value of problem \eqref{eq:allocation} for a given set of VNF-nodes $\U$. Then, the placement subproblem can be formulated as
    \begin{equation} \tag{$P3$}
    \label{eq:placement}
    \begin{aligned}
    & \underset{\U \in \V}{\text{maximize}} \quad   {J_3}( \U)\\
    & \text{subject to} \quad \eqref{eq:budget}. 
    \end{aligned}
    \end{equation}
    Note that in order to solve problem \eqref{eq:placement}, we need to solve problem \eqref{eq:allocation} to find the optimal $\boldsymbol{\lambda}$ for a given set of VNF-nodes  $\U$.
    In the following theorem, we will show that both subproblems \eqref{eq:allocation} and \eqref{eq:placement} are NP-hard.
    \begin{theorem}
        \label{theorem:nphardness}
        The capacity allocation subproblem \eqref{eq:allocation} and the placement subproblem \eqref{eq:placement} are both NP-hard.    
    \end{theorem}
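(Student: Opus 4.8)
The plan is to prove NP-hardness of the two subproblems separately, each by a reduction from a classical NP-hard problem: \textsc{Subset-Sum} (equivalently \textsc{Partition}) for the capacity allocation subproblem~\eqref{eq:allocation}, and (budgeted) \textsc{Maximum Coverage} for the placement subproblem~\eqref{eq:placement}. The key observation enabling the first reduction is that, although flows are splittable over their paths, a flow contributes to the objective only when it is \emph{fully} processed, so the hard combinatorial core is the integral decision of which flows to fully serve.

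For~\eqref{eq:allocation}, I would take the degenerate instance in which $\U=\{v\}$ consists of a single VNF-node lying on the path of every flow. Any feasible allocation then reduces to picking a subset $S\subseteq\F$ of fully processed flows with $\sum_{f\in S}\lambda_f\le c_v$, and the objective $J_2^\U(\boldsymbol\lambda)$ equals $\sum_{f\in S}\lambda_f$ (assigning rate to flows one does not intend to fully process only wastes capacity). Given a \textsc{Subset-Sum} instance $(w_1,\dots,w_n;W)$ — which we may assume satisfies $w_i\le W$ for all $i$, since any larger item is useless and can be discarded — set $c_v=W$ and $\lambda_{f_i}=w_i$; this respects the standing assumption that the largest flow rate is at most the smallest capacity. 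The optimum of~\eqref{eq:allocation} then equals $W$ if and only if some subset of the $w_i$ sums exactly to $W$, so solving~\eqref{eq:allocation} decides \textsc{Subset-Sum}, proving it NP-hard.

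For~\eqref{eq:placement}, the idea is to make the capacity constraint~\eqref{eq:nodecapacity} inactive so that the difficulty is isolated in the placement decision (in particular the hardness persists even if an oracle for~\eqref{eq:allocation} is available, since here that oracle is trivial). Given a \textsc{Maximum Coverage} instance — a universe of weighted elements $\{e_1,\dots,e_n\}$, sets $S_1,\dots,S_m$ with costs, and budget $B$ — I would create one node $v_j$ per set with cost $b_{v_j}$ equal to the cost of $S_j$, and one flow $f_i$ per element with rate $\lambda_{f_i}$ equal to the weight of $e_i$ and with $\V_{f_i}=\{v_j: e_i\in S_j\}$ (the model permits viewing $\V_f$ as the set of candidate processing locations; if an actual graph path is insisted upon, one inserts dummy connector nodes of cost exceeding $B$). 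Choosing every $c_{v_j}\ge\sum_i\lambda_{f_i}$ keeps~\eqref{eq:nodecapacity} slack and the standing assumption satisfied, so $J_3(\U)=\sum_{f\in\F_\U}\lambda_f$. Hence a budget-feasible $\U$ maximizes $J_3$ exactly when the corresponding sets maximize covered weight within budget, and NP-hardness of (budgeted) \textsc{Maximum Coverage} transfers to~\eqref{eq:placement}.

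The conceptual content is entirely in these two reductions; the remaining work — checking feasibility, the standing capacity assumption, and the equivalence of optima — is routine. I expect the only fiddly point to be the graph-realizability of the prescribed flow sets $\V_{f_i}$ in the placement reduction, which I would dispatch either by invoking the model's reading of $\V_f$ as an unordered set of potential locations or by the high-cost dummy-node padding mentioned above; this is bookkeeping rather than a genuine obstacle.
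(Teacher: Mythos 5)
Your proposal is correct and follows essentially the same route as the paper: the authors also reduce the allocation subproblem~\eqref{eq:allocation} from the single-knapsack problem with profit equal to weight (the optimization form of \textsc{Subset-Sum}, via a single VNF-node traversed by all flows) and the placement subproblem~\eqref{eq:placement} from budgeted maximum coverage with nodes as sets, flows as weighted elements, and capacities made non-binding. Your extra remark on realizing the sets $\V_{f_i}$ as actual graph paths addresses a detail the paper's proof leaves implicit, but it does not change the substance of the argument.
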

    \begin{proof}
    See Appendix~\ref{proof:nphardness}.
    \end{proof}

    \subsection{Non-submodularity}
    \label{subsec:submodularity}

    Note that the objective function $J_3(\U)$ of the placement subproblem \eqref{eq:placement} is a set function. At first glance, problem \eqref{eq:placement} looks like a submodular maximization problem, which has been extensively studied in the literature and can be approximately solved using efficient algorithms \cite{nemhauser1981maximizing, khuller1999budgeted}. However, we will show that the objective function $J_3(\U)$ is generally non-submodular, which makes the placement subproblem \eqref{eq:placement} and the overall problem \eqref{eq:mainProblem} much more challenging. We first give the definition of submodular functions.
    \begin{definition}
        For a finite set of elements $\V$, a function $H: 2^{\V} \rightarrow \mathbb{R}$ is submodular if for any subset $\V_1 \subseteq \V_2 \subseteq \V$ and any element $v \in \V \backslash \V_2$, we have
        \begin{equation}
        \label{eq:submodular}
        H({\V_1 \cup \{v\}}) - H(\V_1)  \geq H({\V_2 \cup \{v\}}) - H(\V_2). 
        \end{equation}
    \end{definition}
     
    The above definition exhibits an important property of diminishing returns. In our problem, if the VNF-node capacity is infinite, i.e., there is no capacity constraint \eqref{eq:nodecapacity}, then a flow $f$ can always be fully processed as long as its path has at least one VNF-node, i.e., $\V_f \cap \U \neq \emptyset$. In this case, the total processed traffic ${J_1}(\U, \boldsymbol{\lambda})$ can be rewritten as
    \begin{equation}
    J^\prime_1(\U) = \sum_{f\in \F} \lambda_{f} \boldsymbol{1}_{\{ \V_f \cap \U \neq \emptyset \}},
    \end{equation}
    where the capacity allocation becomes irrelevant as it does not impact the value of function ${J^\prime_1(\U)}$. It has been shown in \cite{poularakis2017one} that the function ${J^\prime_1}(\U)$  is monotonically nondecreasing and submodular. In this special case, problem \eqref{eq:mainProblem}  with objective function $J^\prime_1(\U)$ can be approximately solved using  efficient greedy algorithms. 
    
        \begin{figure}
        \centering
        \resizebox{0.35\linewidth}{!}{
            \begin{tikzpicture}[transform shape]
            
            \node[vertex](v_1) at (2, 1) {$ v_1 $};
            \node[vertex](v_2) at (4, 0){$ v_2 $};
            \node[vertex ](v_3) at (2, -1){$ v_3 $};

            \node[align=left, color=blue] at (1.6,0) {$f_3$};
            \node[align=left, color=blue] at (3.2, -0.9) {$f_2$};
            \node[align=left, color=blue] at (3.2,.85) {$f_1$};
            \begin{scope}[every path/.style={-}, every node/.style={inner sep=1pt}]

            \path (v_1) edge [ anchor= east] node {} (v_2);
            \path (v_1) edge [ anchor= south] node {} (v_3);
            \path (v_3) edge [ anchor= south] node {} (v_2);        
            \end{scope}
            
            \draw[->] [blue, every node/.style={inner sep=2pt}] plot [smooth, tension=.5] coordinates {  (3.9, -.3)  (2.3, -1.1)}; 
            \draw[->] [blue, every node/.style={inner sep=2pt}] plot [smooth, tension=1] coordinates {  (1.85, -.7) (1.85, .8)};
            \draw[->] [blue, every node/.style={inner sep=2pt}] plot [smooth, tension=1] coordinates { (2.3, 1.1)  (3.8, 0.3) };
            \end{tikzpicture}}
        \caption{An example to show non-submodularity of $J_3(\U)$}
        \label{fig:nonSubmodularity}
    \end{figure}
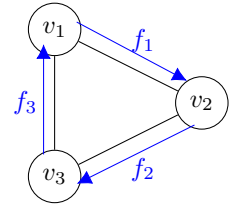
    
    However, using the example presented in Fig. \ref{fig:nonSubmodularity}, we show that the objective function $J_3(\U)$ is no longer submodular if the VNF-nodes have a limited capacity. Consider three flows: flow $f_1$ with path $v_1 \rightarrow v_2$, flow $f_2$ with path $v_2 \rightarrow v_3$, and flow $f_3$ with path $v_3 \rightarrow v_1$. Assume that each VNF-node has a capacity of 3, and each flow has a traffic rate of 2. If node $v_3$ is the only VNF-node, then it can only support one flow because its capacity is 3. Therefore, the marginal contribution of adding node $v_3$ as a VNF-node to the empty set is $J_3(\{v_3\}) - J_3(\emptyset)=2-0=2$. Now, assume that before making node $v_3$ a VNF-node, node $v_2$ is already a VNF-node, which can support one flow. By making node $v_3$ a VNF-node, all three flows can be fully processed, and hence, the total processed traffic becomes 6, i.e., the marginal contribution  of adding node $v_3$ to the set $\{v_2\}$ is $J_3(\{v_2\} \cup \{v_3\}) - J_3(\{v_2\})= 6 - 2= 4 > J_3(\{v_3\}) - J_3(\emptyset) = 2$. This violates the definition of submodular set functions in Eq. \eqref{eq:submodular}. 
    
    As we mentioned in Section \ref{sec:related}, there are several useful techniques that have been developed to handle non-submodular functions, such as weak submodularity \cite{das2011submodular, chen2017weakly} and supermodularity \cite{feldman2014constrained}. 
    When the notion of weak submodularity is considered, one uses a parameter\footnote{For a finite set of elements $\V$, a function $H: 2^{\V} \rightarrow \mathbb{R}$ is $\gamma$-weakly submodular for some $\gamma \in (0, 1]$ if for every two subsets $\V_1, \V_2 \subseteq \V$, we have $\sum_{v \in \V_2} H(\{v\} | \V_1) \geq \gamma H(\V_2 | \V_1)$ \cite{chen2017weakly}.} $\gamma \in (0, 1]$ to measure the deviation of the objective function from being submodular. A larger value of $\gamma$ is better. For the example presented in Fig. \ref{fig:nonSubmodularity}, we can show that the value of $\gamma$ is $0.66$. However, the value of $\gamma$ for function $J_3(\U)$ could be small in general. Even if we can show that $\gamma$ is relatively large, we are still faced with the following issues that hinder the application of weak submodularity to solving the VPCA problem. First, to our best knowledge, the case with knapsack constraint has not been studied yet in literature. Second, it is hard to analyze the value of $\gamma$ in our case. This is due to the fact that the placement value function is the optimal value to the resource allocation subproblem, which is NP-hard (see Theorem~\ref{theorem:nphardness}).
    On the other hand, the supermodular degree characterizes the level of violation of submodularity for a set function. For problems with a bounded supermodular degree, the authors of \cite{feldman2014constrained} propose a greedy algorithm with performance guarantees for the case with a non-submodular objective function.  However, the proposed greedy algorithm has two main limitations. First, its approximation ratio is  a function of the supermodular degree, which, in our case, could be as large as the number of nodes in the network. Second, its complexity  is exponential in the supermodular degree and could be prohibitively high when the supermodular degree is large. 
    
    Therefore, our problem \eqref{eq:mainProblem} is much more challenging than other similar problems studied in prior work, where the objective function is either submodular or weakly submodular with a large $\gamma$, or there is a bounded supermodular degree. To that end, in the next section we will address the aforementioned unique challenges by introducing a novel relaxation and a problem reformulation, which enable us to propose two algorithms with constant approximation ratios.
    
    \section{Relaxation and Reformulation}
    \label{sec:relaxationAndReformulation}
    In this section, we present  a relaxation of the VPCA problem that allows partially processed flows to be counted. Further, we introduce a novel network flow reformulation of the relaxed capacity allocation subproblem. Both of these techniques will be utilized in designing two efficient approximation algorithms in the next section. 
    \subsection{Relaxed VPCA Formulation}
    \label{subsec:relaxedFormulation}
    We first introduce the relaxed VPCA problem, which allows partially processed flows to be counted. In the relaxed VPCA problem, any fraction of  flow $f$ processed by VNF-nodes in $\V_f \cap \U$ will be counted in the total processed traffic. That is, the relaxed ${J_1}(\U, \boldsymbol{\lambda})$ can be expressed as follows:
    \begin{equation}
    {R_1}(\U, \boldsymbol{\lambda})\triangleq \sum_{f\in \F} \sum_{v\in \V_f \cap \U} \lambda_{f}^v.
    \end{equation}
    Apparently, the total processed traffic of flow $f$ cannot exceed $\lambda_f$, i.e.,  the following constraint needs to be satisfied:
    \begin{equation}
    \sum_{v \in \U} \lambda_{f}^v \leq \lambda_f, \quad \forall f \in \F. \label{eq:traffic2}
    \end{equation}
    Then, the relaxed version of problem \eqref{eq:mainProblem} becomes
    \begin{equation}\tag{$Q1$}
    \label{eq:relaxedProblem}
    \begin{aligned}
    \underset{\U \subseteq \V, \boldsymbol{\lambda}}{\text{maximize}} \quad & {R_1}(\U, \boldsymbol{\lambda})\\
    \text{subject to} \quad & \eqref{eq:nodecapacity}, \eqref{eq:budget}, \eqref{eq:traffic2}. 
    \end{aligned}
    \end{equation}
    
    Next, we decompose problem \eqref{eq:relaxedProblem}, in the same way as we did for problem \eqref{eq:mainProblem}, into placement and allocation subproblems. For a given set of VNF-nodes $\U \subseteq \V$, let $\Lambda^{\U}$ be the set of all flow assignment matrices $\lambda$ that satisfy the capacity constraint \eqref{eq:nodecapacity} and the flow rate constraint \eqref{eq:traffic2}, and let ${R_2^\U}(\boldsymbol{\lambda})$ be the total processed traffic, which has the same expression as that of ${R_1}(\U, \boldsymbol{\lambda})$ but has $\U$ in the superscript so as to indicate that this function is for a given set of VNF-nodes $\U$. Then, the capacity allocation subproblem for a given set of VNF-nodes $\U$ can be formulated as
    \begin{equation}\tag{$Q2$}
    \label{eq:relaxedAllocation}
    \begin{aligned}
    & \underset{\boldsymbol{\lambda} \in \Lambda^{\U}} {\text{maximize}} \quad  {R_2^\U}( \boldsymbol{\lambda}).
    \end{aligned}
    \end{equation}
    
    Now, let ${R_3}(\U) \triangleq \max_{\boldsymbol{\lambda} \in \Lambda^{\U}} {R_2^\U}(\boldsymbol{\lambda})$ denote the optimal value of problem \eqref{eq:relaxedAllocation} for a given set of VNF-nodes $\U$. Then, the placement subproblem can be formulated as
    \begin{equation}\tag{$Q3$}
    \label{eq:relaxedPlacement}
    \begin{aligned}
    & \underset{\U \subseteq \V}{\text{maximize}} \quad   {R_3}( \U)\\
    & \text{subject to} \quad \eqref{eq:budget}. 
    \end{aligned}
    \end{equation}
    
    Note that although the relaxed placement subproblem \eqref{eq:relaxedPlacement} can still be shown to be NP-hard, we will prove that  the objective function ${R_3}(\U)$ is monotonically nondecreasing  and submodular. This useful submodular property allows us to approximately solve problem \eqref{eq:relaxedPlacement}. On the other hand, the relaxed capacity allocation subproblem \eqref{eq:relaxedAllocation} becomes an LP, which can be efficiently solved; alternatively, we can also solve \eqref{eq:relaxedAllocation} using a maximum flow algorithm (discussed at the end of Section~\ref{subsec:placement}). 
    
            \def\x{-2}
    \def\y{0}
    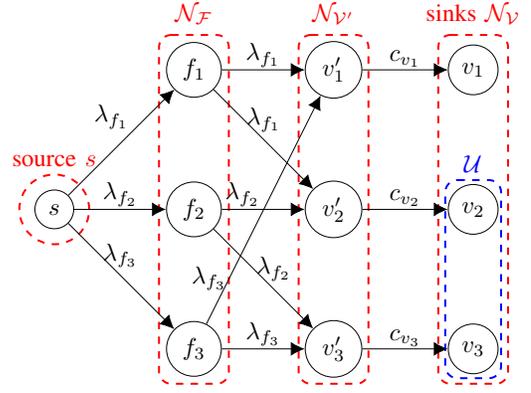
\begin{figure}
        \centering
        \resizebox{0.8\linewidth}{!}{
            \begin{tikzpicture}[transform shape]
            
            \node[vertex](s) at (\x, \y) {$ s $};
            \node[vertex](f1) at (\x+2, \y+2) {$ f_1 $};
            \node[vertex](f2) at (\x+2, \y+0) {$ f_2 $};
            \node[vertex](f3) at (\x+2, \y-2) {$ f_3 $};
            
            \node[vertex](v1prime) at (\x+4, \y+2) {$ v_1^\prime $};
            \node[vertex](v2prime) at (\x+4, \y+0) {$ v_2^\prime $};
            \node[vertex](v3prime) at (\x+4, \y-2) {$ v_3^\prime $};
            
            \node[vertex](v1) at (\x+6, \y+2) {$ v_1 $};
            \node[vertex](v2) at (\x+6, \y+0) {$ v_2 $};
            \node[vertex](v3) at (\x+6, \y-2) {$ v_3 $};

            \draw[red,thick,dashed,rounded corners=5] (\x+2-0.5,\y+2.5) -- (\x+2+0.5,\y+2.5) -- (\x+2+0.5,\y-2.5) -- (\x+2-0.5,\y-2.5) -- cycle;
            \draw[red,thick,dashed,rounded corners=5] (\x+4-0.5,\y+2.5) -- (\x+4+0.5,\y+2.5) -- (\x+4+0.5,\y-2.5) -- (\x+4-0.5,\y-2.5) -- cycle;
            \draw[red,thick,dashed,rounded corners=5] (\x+6-0.5,\y+2.5) -- (\x+6+0.5,\y+2.5) -- (\x+6+0.5,\y-2.5) -- (\x+6-0.5,\y-2.5) -- cycle;
            \draw[blue,thick,dashed,rounded corners=5] (\x+6-0.4,\y-1+1.42) -- (\x+6+0.4,\y-1+1.42) -- (\x+6+0.4,\y-1-1.42) -- (\x+6-0.4,\y-1-1.42) -- cycle;
            \draw[red,thick,dashed] (\x, \y) circle  (.5cm);
            \node[red, align=left] at (\x,\y+.7) {source $s$};
            \node[red, align=left] at (\x+2,\y+2.8) {$\N_{\F}$};
            \node[red, align=left] at (\x+4,\y+2.8) {$\N_{\V^{\prime}}$};
            \node[red, align=left] at (\x+6,\y+2.8) {sinks $\N_{\V}$};
            \node[blue, align=left] at (\x+6,\y+.65) {$\U$};
            
            \begin{scope}[every path/.style={->}, every node/.style={inner sep=1pt}]

            \path (s) edge [ anchor= south, pos=0.6, above left] node {$\lambda_{f_1}$} (f1);
            \path (s) edge [ anchor= south, pos=0.5,] node {$\lambda_{f_2}$} (f2);
            \path (s) edge [ anchor= south, pos=0.3, right] node {$\lambda_{f_3}$} (f3);
            
            \path (f1) edge [ anchor= south] node {$\lambda_{f_1}$} (v1prime);
            \path (f1) edge [ anchor= south, pos=0.3,right] node {$\lambda_{f_1}$} (v2prime);
            \path (f2) edge [ anchor= south, near start] node {$\lambda_{f_2}$} (v2prime);
            \path (f2) edge [ anchor= south, pos=0.4, right] node {$\lambda_{f_2}$} (v3prime);
            \path (f3) edge [ anchor= south, pos=0.2, left] node {$\lambda_{f_3}$} (v1prime);
            \path (f3) edge [ anchor= south] node {$\lambda_{f_3}$} (v3prime);
            
            \path (v1prime) edge [ anchor= south] node {${c_{v_1}}$} (v1);
            \path (v2prime) edge [ anchor= south] node {${c_{v_2}}$} (v2);
            \path (v3prime) edge [ anchor= south] node {${c_{v_3}}$} (v3);
            \end{scope}
            
            \end{tikzpicture}}
        \caption{An example of the constructed graph $Z$ for the network in Fig.~\ref{fig:nonSubmodularity}, where $\F = \{f_1, f_2, f_3\}$, $\V = \{v_1, v_2, v_3\}$, $\V_{f_1} = \{v_1, v_2\}$, $\V_{f_2} = \{v_2, v_3\}$, and $\V_{f_3} = \{v_1, v_3\}$}
        \label{fig:netFlow}
    \end{figure}
    
    \subsection{Network Flow Formulation}
    \label{subsec:networkFlowFormulation}
    In this subsection, we introduce a novel network flow reformulation of problem \eqref{eq:relaxedAllocation}. The purpose of this reformulation is two-fold: i) we will use it to prove that the objective function of the relaxed placement subproblem \eqref{eq:relaxedPlacement} is submodular; ii) we will leverage it to develop a combinatorial algorithm for problem \eqref{eq:relaxedAllocation} based on the efficient maxflow algorithms (e.g., \cite{goldberg2014efficient}), which is also a key component of the approximation algorithms we will propose for the original VPCA problem.
    
    For problem \eqref{eq:relaxedAllocation}, we reformulate a network flow problem by constructing a directed graph $Z=(\N, \link)$ as follows. The set of vertices $\N$ consists of the following: an artificial source vertex $s$, set $\N_{\F}$ consisting of flow-vertices $f$ each corresponding to flow $f \in \F$, set $\N_{\V}$ consisting of node-vertices $v$ each corresponding to node $v \in \V$, and set $\N_{\V^{\prime}}$ consisting of node-vertices $v^\prime$ each corresponding to node $v \in \V$. Hence, $\N = \{s\} \cup \N_{\F} \cup \N_{\V} \cup \N_{\V^{\prime}}$, where $\N_{\V}$ consists of the sinks. Let $(x, y)$ be an edge in $\link$, which is from $x \in \N$ to $y \in \N$. The set of edges $\link$ consists of the following: set $\link_1$ consisting of edges $(s, f)$ connecting the source vertex $s$ to each flow-vertex $f \in \N_{\F}$, set $\link_2$ consisting of edges $(f, v^\prime)$ connecting each flow-vertex $f \in \N_{\F}$ to each node-vertex $v^\prime \in \N_{\V^{\prime}}$ corresponding to a node $v \in \V_f$, set $\link_3$ consisting of edges $(v^\prime, v)$ connecting each node-vertex $v^\prime \in \N_{\V^{\prime}}$ to its corresponding node-vertex $v \in \N_{\V}$. We use $c(x, y)$ to denote the capacity of edge $(x, y)$. Hence, $\link = \link_1 \cup \link_2 \cup \link_3$. An edge $(s, f) \in \link_1$ has capacity $\lambda_f$; an edge $(f, v^\prime) \in \link_2$ has capacity $\lambda_f$; an edge $(v^\prime, v) \in \link_3$ has capacity $c_v$. Fig.~\ref{fig:netFlow} presents an example of the constructed graph $Z$ for the network in Fig.~\ref{fig:nonSubmodularity}. 
    
    Next, we describe flows over graph $Z$. Consider functions $\varphi(x, y): \N \times \N \rightarrow \mathbb{R}_+$, where $\mathbb{R}_+$ is the set of non-negative real numbers.
    We define $\Phi(\mathcal{X}, \mathcal{Y}) \triangleq \sum_{x \in \mathcal{X}} \sum_{y \in \mathcal{Y}} \varphi(x, y)$ for $\mathcal{X}, \mathcal{Y} \subseteq \N$. 
    An $s$-$\V$ flow is a function $\varphi(x, y): \N \times \N \rightarrow \mathbb{R}_+$ such that the following is satisfied:
    \begin{enumerate}
        \item Capacity constraints: $\varphi(x, y) \leq c(x, y)$ for all pairs $(x, y) \in \N \times \N$. (Note that $c(x,y)=0$ if $(x,y) \notin \link$.)
        \item Flow conservation: the net-flow at every non-source non-sink vertex $x \in \N \setminus (\{s\} \cup \N_{\V})$ is zero, i.e., $\Phi(\N, \{x\}) - \Phi(\{x\}, \N)$ = 0.
        \item Positive incoming flow: the net-flow at the source $s$ is non-positive, i.e., $\Phi(\N, \{s\}) - \Phi(\{s\}, \N) \leq 0$.
        \item Positive outgoing flow: the net-flow at every sink $t \in \N_{\V}$ is non-negative, i.e., $\Phi(\N, \{t\}) - \Phi(\{t\}, \N) \geq 0$.
    \end{enumerate}
    Let $\overline{\F}$ be the set of all $s$-$\V$ flows over $Z$.

    For a subset of sinks\footnote{Note that each node $v \in \V$ corresponds to a sink in $\N_{\V}$. Hence, by slightly abusing the notations, for any $\U \subseteq \V$, we also use $\U$ to denote the corresponding subset of sinks in $\N_{\V}$.} $\U \subseteq \N_{\V}$, we define
    \begin{equation}
    \label{eq:net_flow}
        F(\U) \triangleq \max_{\varphi \in \overline{\F}} (\Phi(\N, \U) - \Phi(\U, \N)),
    \end{equation}
    which is the maximum total net-flow at the sinks in $\U$. The maximum net-flow problem is to find an $s$-$\V$ flow (i.e., function $\varphi$) that achieves the maximum in \eqref{eq:net_flow}.
     In Lemma~\ref{lemma:Q2MaxFlow}, we show the equivalence between the capacity allocation subproblem \eqref{eq:relaxedAllocation} and the maximum net-flow problem \eqref{eq:net_flow}.

    \begin{lemma}
        \label{lemma:Q2MaxFlow}
        The capacity allocation subproblem \eqref{eq:relaxedAllocation} is equivalent to the maximum net-flow problem \eqref{eq:net_flow}. 
        Hence, for any given $\U \subseteq \V$, the optimal value of problem \eqref{eq:relaxedAllocation} is equal to the maximum total net-flow at the sinks in $\U \subseteq \N_{\V}$ of the associated graph $Z$, i.e., 
         \begin{equation}
         \label{eq:equivalence}
         R_3(\U) = F(\U).
         \end{equation}
    \end{lemma}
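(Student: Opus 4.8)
The plan is to exhibit a value-preserving correspondence between feasible assignments $\boldsymbol{\lambda} \in \Lambda^{\U}$ for the relaxed allocation subproblem \eqref{eq:relaxedAllocation} and $s$-$\V$ flows $\varphi \in \overline{\F}$ over $Z$, such that the objective $R_2^{\U}(\boldsymbol{\lambda})$ always equals the net-flow $\Phi(\N, \U) - \Phi(\U, \N)$ of the corresponding $\varphi$ at the sinks in $\U$; the equality \eqref{eq:equivalence} then follows by taking maxima on both sides. Two elementary facts will be used repeatedly: the sinks in $\N_{\V}$ have no outgoing edges, so $\Phi(\U, \N) = 0$ and the net-flow at the sinks in $\U$ is just $\sum_{v \in \U} \varphi(v', v)$; and it is without loss of generality to consider only assignments with $\lambda_f^v = 0$ whenever $v \notin \V_f$, since such mass never contributes to $R_2^{\U}$ while still consuming capacity, hence is absent in an optimal $\boldsymbol{\lambda}$ (and $Z$ only has edges $(f,v')$ for $v \in \V_f$).

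For the direction $F(\U) \ge R_3(\U)$, I would map a feasible $\boldsymbol{\lambda}$ to the flow defined by $\varphi(f, v') \triangleq \lambda_f^v$ on $\link_2$, $\varphi(v', v) \triangleq \sum_{f \in \F} \lambda_f^v$ on $\link_3$, and $\varphi(s, f) \triangleq \sum_{v \in \U} \lambda_f^v$ on $\link_1$, with $\varphi = 0$ elsewhere. The capacity bounds on $\link_1$ and $\link_2$ follow from $0 \le \lambda_f^v \le \sum_{w \in \U}\lambda_f^w \le \lambda_f$ via \eqref{eq:traffic2}, and the bound on $\link_3$ follows from \eqref{eq:nodecapacity}; flow conservation at each $f \in \N_{\F}$ and each $v' \in \N_{\V'}$ holds by construction, and the source/sink sign conditions are immediate. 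Since $\lambda_f^v = 0$ for $v \notin \U$, the net-flow into $\U$ is $\sum_{v \in \U}\varphi(v',v) = \sum_{v \in \U}\sum_{f \in \F}\lambda_f^v = R_2^{\U}(\boldsymbol{\lambda})$, so $F(\U) \ge R_3(\U)$.

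For the reverse direction $R_3(\U) \ge F(\U)$, I would take any $\varphi \in \overline{\F}$ and set $\lambda_f^v \triangleq \varphi(f,v')$ for $v \in \V_f \cap \U$ and $\lambda_f^v \triangleq 0$ otherwise. Feasibility is read off directly from conservation: \eqref{eq:nodecapacity} holds because $\sum_{f}\lambda_f^v \le \sum_{f:\,v\in\V_f}\varphi(f,v') = \varphi(v',v) \le c_v$ (conservation at $v'$ plus the capacity of $(v',v)$), and \eqref{eq:traffic2} holds because $\sum_{v\in\U}\lambda_f^v \le \sum_{v\in\V_f}\varphi(f,v') = \varphi(s,f) \le \lambda_f$ (conservation at $f$ plus the capacity of $(s,f)$). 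Using conservation at $v'$ once more, $R_2^{\U}(\boldsymbol{\lambda}) = \sum_{f}\sum_{v\in\V_f\cap\U}\varphi(f,v') = \sum_{v\in\U}\sum_{f:\,v\in\V_f}\varphi(f,v') = \sum_{v\in\U}\varphi(v',v) = \Phi(\N,\U) - \Phi(\U,\N)$; applying this to an optimal $\varphi$ for \eqref{eq:net_flow} gives $R_3(\U) \ge F(\U)$, and the two inequalities together yield \eqref{eq:equivalence}.

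I expect the only genuine subtlety, rather than a real obstacle, to be that $Z$ is built from $\F$, $\V$, and the paths $\V_f$ and does not depend on $\U$, so a generic $s$-$\V$ flow may legitimately push traffic through node-vertices $v'$ and sinks $v$ with $v \notin \U$. The point to verify carefully is that this is harmless: discarding those components when forming $\boldsymbol{\lambda}$ (i.e., restricting to $v \in \V_f \cap \U$) removes only flow that never reaches the sinks in $\U$, so the chain of equalities above still matches $R_2^{\U}(\boldsymbol{\lambda})$ with the net-flow into $\U$ \emph{exactly}, with no need to prune or decompose $\varphi$ beforehand. Everything else is a direct translation between the flow-conservation equations at $\N_{\F}$ and $\N_{\V'}$ and the constraints \eqref{eq:traffic2} and \eqref{eq:nodecapacity}.
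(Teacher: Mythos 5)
Your proposal is correct and follows essentially the same route as the paper's proof: a two-direction, value-preserving translation between feasible assignments $\boldsymbol{\lambda}\in\Lambda^{\U}$ and $s$-$\V$ flows $\varphi\in\overline{\F}$, with the equality \eqref{eq:equivalence} obtained by taking maxima on both sides. The only cosmetic difference is in the reverse direction, where the paper first prunes the flow reaching sinks outside $\U$ before defining $\boldsymbol{\lambda}$, whereas you absorb that pruning into the restriction $\lambda_f^v=\varphi(f,v')$ for $v\in\V_f\cap\U$ only --- both yield the same chain of equalities via conservation at $\N_{\F}$ and $\N_{\V'}$.
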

    \begin{proof}
    See Appendix~\ref{proof:Q2MaxFlow}.
    \end{proof}

    \section{Proposed Algorithms}
    \label{sec:VPCAALgorithm}
    In this section, we design two efficient algorithms that can achieve constant approximation ratios for the VPCA problem \eqref{eq:mainProblem}. The main idea is to utilize the relaxation introduced in the previous section, which allows partially processed flows to be counted. By doing so, we can show that the relaxed placement subproblem is submodular based on the network flow reformulation of the relaxed capacity allocation subproblem. In this case, the relaxed placement subproblem can be approximately solved using efficient greedy algorithms. Moreover, the relaxed allocation subproblem becomes a Linear Program (LP), which can also be solved efficiently in polynomial time. However, the solution to the relaxed problem is for the case where any fraction of the processed flows is counted. In order to obtain a solution for the original VPCA problem \eqref{eq:mainProblem}, where only the fully processed flows are counted, we propose two approximation algorithms by modifying the solution to the relaxed capacity allocation subproblem: the first one is based on a maximum flow algorithm, and the second one is based on a greedy algorithm. 
    
    We use RP-MCA and RP-GCA to denote the algorithms we develop by combining the Relaxed Placement with the Maximum-flow-based Capacity Allocation and the Greedy Capacity Allocation, respectively. We show that the RP-MCA and RP-GCA algorithms achieve an approximation ratio of $\frac{1}{2}(1-1/e)$  and $\frac{1}{3}(1-1/e)$, respectively. We describe the algorithms in a unified framework presented in Algorithm \ref{alg:proposed}. The difference is in the capacity allocation subproblem (line \ref{line:capacityAllocation}), where RP-MCA algorithm uses a Max-flow-based Capacity Allocation (MCA) algorithm presented in Algorithm \ref{alg:MCA}, while RP-GCA algorithm uses a Greedy Capacity Allocation (GCA) algorithm presented in Algorithm \ref{alg:GCA}.
    
    \begin{algorithm}[t]
        \caption{The RP-MCA  and RP-GCA algorithms}
        \label{alg:proposed}
        \begin{algorithmic}[1]
            \Statex {Input: set of nodes $\V$, set of flows $\F$, node capacities, node costs, flow rates, and budget $B$.}
            \Statex Output: set of VNF-nodes $\U$, capacity allocation $\boldsymbol{\lambda}$.
            \State \textbf{Relaxed Problem}: relax function ${J_1}(\U, \boldsymbol{\lambda})$ to become ${R_1}(\U, \boldsymbol{\lambda})$;
            \State \textbf{Placement Subproblem}: solve problem \eqref{eq:relaxedPlacement} using the Submodular Greedy algorithm  or the Enumeration-based Greedy algorithm, described in Section \ref{subsec:placement}, to obtain $\U$.
            \State \textbf{Capacity Allocation}: use either the MCA algorithm (Algorithm \ref{alg:MCA}) or the GCA algorithm (Algorithm \ref{alg:GCA}) to obtain capacity allocation $\boldsymbol{\lambda}$. \label{line:capacityAllocation}
        \end{algorithmic}
    \end{algorithm}
   
    \subsection{Proposed Placement Algorithms}
    \label{subsec:placement}
    In this subsection, we first prove in Lemma \ref{lemma:submodularity} that the objective function $R_3(\U)$ of the relaxed placement subproblem \eqref{eq:relaxedPlacement} is monotonically nondecreasing and submodular. Then, using the property of submodularity, we propose two greedy algorithms for solving the placement subproblem. 
    \begin{lemma}
        \label{lemma:submodularity}
        The function ${R_3}(\U)$ is monotonically nondecreasing and submodular.
    \end{lemma}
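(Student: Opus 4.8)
The plan is to reduce the claim to the classical fact that the value of a maximum flow from a fixed source to a variable set of sinks is monotone and submodular in that set, and then prove that fact via minimum cuts. By Lemma~\ref{lemma:Q2MaxFlow} we have $R_3(\U)=F(\U)$, so it suffices to show that $F:2^{\V}\to\mathbb{R}_+$ is monotonically nondecreasing and submodular. First I would pass from the net-flow formulation \eqref{eq:net_flow} to a single-sink max-flow/min-cut formulation: augment $Z$ with a super-sink $t^{*}$ together with arcs $(v,t^{*})$ of infinite capacity for every $v\in\U$. Because $Z$ is layered as $s\to\N_{\F}\to\N_{\V^{\prime}}\to\N_{\V}$ and the vertices of $\N_{\V}$ are leaves, any flow that accumulates at a sink outside $\U$ can be rerouted back to $s$ along its (length-three) decomposition path without decreasing the net flow into $\U$; hence $F(\U)$ equals the value of a maximum $s$--$t^{*}$ flow in the augmented graph. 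By the max-flow/min-cut theorem, and since the arcs into $t^{*}$ have infinite capacity, this value equals
\[
F(\U)=\min\bigl\{\,c(C,\N\setminus C)\ :\ s\in C\subseteq\N,\ C\cap\U=\emptyset\,\bigr\},
\]
the minimum capacity of an $s$--$\U$ cut in $Z$, where $c(C,\N\setminus C)\triangleq\sum_{x\in C}\sum_{y\notin C}c(x,y)$.

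Given this characterization, monotonicity is immediate: if $\U_1\subseteq\U_2$, then every $s$--$\U_2$ cut is also an $s$--$\U_1$ cut, so $F(\U_1)$ is a minimum over a larger family and $F(\U_1)\le F(\U_2)$. For submodularity, fix $\U_1\subseteq\U_2\subseteq\V$ and $v\in\V\setminus\U_2$. Let $C_1$ attain the minimum defining $F(\U_1\cup\{v\})$ and $C_2$ attain the minimum defining $F(\U_2)$. Then $s\in C_1\cap C_2\subseteq C_1\cup C_2$, and a short set-containment check gives that $C_1\cap C_2$ is an $s$--$(\U_2\cup\{v\})$ cut (it meets neither $\U_2$, via $C_2$, nor $\{v\}$, via $C_1$) and $C_1\cup C_2$ is an $s$--$\U_1$ cut (it meets $\U_1$ through neither $C_1$ nor $C_2$, using $\U_1\subseteq\U_2$); hence $c(C_1\cap C_2,\cdot)\ge F(\U_2\cup\{v\})$ and $c(C_1\cup C_2,\cdot)\ge F(\U_1)$. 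Combining with the submodularity of the directed cut function, $c(C_1,\cdot)+c(C_2,\cdot)\ge c(C_1\cap C_2,\cdot)+c(C_1\cup C_2,\cdot)$, yields
\[
F(\U_1\cup\{v\})+F(\U_2)\ \ge\ F(\U_2\cup\{v\})+F(\U_1),
\]
which rearranges to the defining inequality \eqref{eq:submodular} for $R_3=F$.

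The main obstacle is the reduction to the minimum-cut characterization: the definition of an $s$--$\V$ flow treats \emph{all} of $\N_{\V}$ as sinks, so one must justify that flow routed into sinks outside $\U$ is never beneficial, which is where the layered structure of $Z$ and a flow-decomposition/rerouting argument come in; once $F(\U)$ is identified with the $s$--$t^{*}$ max-flow value, everything else is routine (the submodularity of directed cut functions is a standard arc-by-arc case analysis, and the uncrossing of $C_1,C_2$ is pure set algebra). I would also note the alternative of arguing submodularity directly from the LP \eqref{eq:relaxedAllocation} via flow augmentation, but the min-cut route is cleaner and avoids case-work on fractional allocations.
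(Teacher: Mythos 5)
Your proof is correct and follows essentially the same route as the paper's: both reduce $R_3=F$ to a minimum $s$--$\U$ cut in $Z$ via Lemma~\ref{lemma:Q2MaxFlow} and the max-flow/min-cut theorem, and then deduce submodularity from the fact that partial minimization of the (submodular) directed cut function over the sink set is submodular. The only difference is that you prove that last fact explicitly by uncrossing the optimal cuts $C_1,C_2$ (and derive monotonicity from the cut characterization rather than from the problem semantics), whereas the paper simply cites it from \cite{bach2013learning}.
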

            \begin{proof}
        The function $R_3(\U)$ is monotonically nondecreasing because adding an additional VNF-node does not reduce the amount of flows that can be fully processed. 
        
        Next, we prove that the function ${R_3}(\U)$ is submodular. The proof follows from the network flow reformulation introduced in Section~\ref{subsec:networkFlowFormulation}. Applying Lemma~\ref{lemma:Q2MaxFlow} and the max-flow min-cut theorem (see, e.g., \cite[pp.~348--349]{bach2013learning}) immediately gives
        \begin{equation}
        {R_3}(\U) = \min_{\mathcal{X} \subseteq \N: s \in \mathcal{X}, \, \U \subseteq \N \setminus \mathcal{X}} \sum_{x \in \mathcal{X}} \sum_{y \in \N \setminus \mathcal{X}} c(x, y).
        \end{equation}
        One then obtains that $R_3(\U)$ is submodular (as the partial minimization of a cut function) \cite[p.~230]{bach2013learning}. 
    \end{proof}
    
    Because of this useful submodular property, problem \eqref{eq:relaxedPlacement} can be approximately solved using  efficient greedy algorithms. Next, we consider two cases of problem \eqref{eq:relaxedPlacement}: uniform VNF-node costs (Case I, a special case) and heterogeneous VNF-node costs (Case II, a general case). 
    
    In Case I, the VNF-nodes have uniform costs, i.e.,  $b_v = b$ for all $v \in \V$. Then, the budget constraint \eqref{eq:budget} can be expressed as a cardinality constraint, i.e., $|\U| \leq k$, where $k =\lfloor B/b \rfloor$. In this case, we can use a simple \emph{Submodular Greedy} (SG)  algorithm  to approximately solve problem \eqref{eq:relaxedPlacement}. In the SG algorithm, we start with an empty solution of VNF-nodes $\U$; in each iteration, we add a node that has the maximum marginal contribution to $\U$, i.e., a node that leads to the largest increase in the value of the objective function. If multiple nodes have the same marginal contribution, we break ties by selecting a node uniformly at random. We repeat the above procedure until $k$ VNF-nodes have been selected. This solution has been shown to achieve  an approximation ratio of  $(1-1/e)$ \cite{nemhauser1981maximizing}. However, this algorithm does not guarantee to have the same approximation ratio for the case of heterogeneous VNF-node costs \cite{khuller1999budgeted}. 
    
    In Case II, the VNF-nodes have heterogeneous costs, i.e., the costs of VNF-nodes are different. For this case, an \emph{Enumeration-based Greedy} (EG) algorithm has been proposed in \cite{khuller1999budgeted}, which can be shown to achieve the same approximation ratio of  $(1-1/e)$, but with a higher running time complexity compared to the SG algorithm. The EG algorithm has two phases. In Phase I, it samples all node subsets of cardinality one or two that satisfy the budget constraint, picks the one with the largest value of the objective function ${R_3}$, and stores this temporary solution in $\U_1$. In Phase II, the algorithm samples all node subsets of cardinality three and augments each of these subsets with nodes that maximize the relative marginal contribution $({R_3}(\V^\prime \cup \{u\}) - {R_3}(\V^\prime))/b_u$, in a greedy manner. The budget constraint must also be satisfied throughout this procedure. Then, it selects the  augmented subset with the largest value of the objective function ${R_3}$ and stores it in $\U_2$. The final solution will be the better one between $\U_1$ and $\U_2$, i.e., the one that achieves a larger value of the objective function ${R_3}$.
    
    Note that although the value of function $R_3(\U)$ can be obtained using an LP solver, we can alternatively compute it using the network flow formulation presented in Section \ref{subsec:networkFlowFormulation} as follows. For the constructed graph $Z$, we connect all the sink vertices corresponding to nodes $\U$ to an artificial sink vertex $d$. Then, the value of $R_3(\U)$ is the maximum flow from vertex $s$ to vertex $d$ in graph $Z$, which can be computed using several efficient algorithms (see, e.g., \cite{goldberg2014efficient}). In Lemma \ref{lemma:placement}, we restate the results of \cite{nemhauser1981maximizing, khuller1999budgeted} about the approximation ratio of the SG and EG algorithms.
    \begin{lemma}
        \label{lemma:placement}
        Both the SG and EG algorithms achieve an approximation ratio of $(1-1/e)$.
    \end{lemma}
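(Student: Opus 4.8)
The plan is to verify that the objective $R_3$ of the relaxed placement subproblem \eqref{eq:relaxedPlacement} satisfies the standard hypotheses under which the greedy-type algorithms of \cite{nemhauser1981maximizing} and \cite{khuller1999budgeted} are known to attain a $(1-1/e)$ guarantee, and then to invoke those results for the two cost regimes. Three properties are needed: (i) $R_3$ is normalized, i.e., $R_3(\emptyset) = 0$; (ii) $R_3$ is monotonically nondecreasing; and (iii) $R_3$ is submodular. Property (i) holds because with no VNF-nodes deployed, the capacity constraint \eqref{eq:nodecapacity} forces $\boldsymbol{\lambda} = \mathbf{0}$, so no traffic is processed; properties (ii) and (iii) are exactly Lemma~\ref{lemma:submodularity}. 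I would also remark (not strictly needed for the claimed ratio, but making the statement meaningful) that $R_3(\U)$ can be evaluated in polynomial time, either by solving the LP \eqref{eq:relaxedAllocation} or, via Lemma~\ref{lemma:Q2MaxFlow}, by a single maximum-flow computation, so both algorithms are genuine polynomial-time approximation algorithms.

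For Case I, the uniform cost $b_v = b$ reduces \eqref{eq:budget} to the cardinality constraint $|\U| \le k$ with $k = \lfloor B/b \rfloor$. The SG algorithm is precisely the classical greedy algorithm for maximizing a normalized, monotone nondecreasing, submodular function under a cardinality constraint, so by the theorem of Nemhauser, Wolsey, and Fisher \cite{nemhauser1981maximizing} it returns a set whose value is at least $(1-1/e)$ times the optimum of \eqref{eq:relaxedPlacement}. If a self-contained derivation is preferred, I would include the standard one-paragraph argument: letting $\U^*$ be an optimal feasible set and $\U_i$ the greedy set after $i$ steps, submodularity and monotonicity give $R_3(\U^*) - R_3(\U_i) \le \sum_{v \in \U^* \setminus \U_i} \big(R_3(\U_i \cup \{v\}) - R_3(\U_i)\big) \le k\,\big(R_3(\U_{i+1}) - R_3(\U_i)\big)$, which yields $R_3(\U^*) - R_3(\U_{i+1}) \le (1-1/k)\big(R_3(\U^*) - R_3(\U_i)\big)$ and hence $R_3(\U_k) \ge \big(1 - (1-1/k)^k\big) R_3(\U^*) \ge (1-1/e) R_3(\U^*)$.

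For Case II, the budget constraint \eqref{eq:budget} is a general knapsack constraint with item costs $b_v$ and submodular item values $R_3$. The EG algorithm is exactly the partial-enumeration greedy of \cite{khuller1999budgeted}: Phase~I optimizes $R_3$ over all feasible subsets of cardinality one or two; Phase~II, for each feasible subset of cardinality three, greedily appends elements maximizing the cost-normalized marginal gain $\big(R_3(\V^\prime \cup \{u\}) - R_3(\V^\prime)\big)/b_u$ subject to feasibility; the output is the better of the two phases. Invoking the analysis of \cite{khuller1999budgeted} — whose $(1-1/e)$ guarantee holds for any normalized monotone submodular objective, not only for coverage functions — delivers the ratio. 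Intuitively, the size-three enumeration "absorbs" the few high-cost elements of an optimal solution that would otherwise break the density-based greedy bound; once such a seed is fixed, the remaining greedy steps admit the same $(1-1/k)$-type contraction as in Case~I relative to the residual optimum, and combining the two phases recovers $(1-1/e)$.

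The step requiring the most care is not a deep obstacle but a matter of bookkeeping: confirming that the hypotheses of the cited theorems hold verbatim in our setting — in particular the normalization $R_3(\emptyset)=0$, which is immediate here, and the fact that the $(1-1/e)$ bound of \cite{khuller1999budgeted}, originally phrased for Budgeted Maximum Coverage, carries over to the abstract monotone submodular setting. If a fully self-contained proof is desired, reproducing the two-phase analysis of \cite{khuller1999budgeted} in submodular language for Case~II is the only nontrivial component; otherwise the lemma follows directly from Lemma~\ref{lemma:submodularity} together with \cite{nemhauser1981maximizing,khuller1999budgeted}.
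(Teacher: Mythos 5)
Your proposal is correct and takes essentially the same route as the paper, which simply invokes the results of \cite{nemhauser1981maximizing} and \cite{khuller1999budgeted} after establishing monotonicity and submodularity of $R_3$ in Lemma~\ref{lemma:submodularity}. The extra bookkeeping you add (checking $R_3(\emptyset)=0$, noting polynomial-time evaluability of $R_3$ via the LP or max-flow, and confirming that the guarantee of \cite{khuller1999budgeted} holds for general monotone submodular objectives) is sound and only makes the citation-based argument more explicit.
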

    \begin{proof}
        The proofs can be found in \cite{nemhauser1981maximizing} and \cite{khuller1999budgeted} for the SG algorithm and the EG algorithm, respectively.
    \end{proof}
    
    \subsection{ Proposed Capacity Allocation Algorithms}
    \label{subsec:capacityAllocation}
    While the solution of problem \eqref{eq:relaxedPlacement} allows partially processed flows to be counted, only fully processed flows will be counted in the original problem \eqref{eq:mainProblem}. To that end,  we propose two algorithms to modify the capacity allocation of VNF-nodes $\U$ so as to ensure fully processed flows and provide certain performance guarantees. The first algorithm is based on the network flow formulation, and the second one is based on a simple greedy approach. We develop these algorithms by modifying two algorithms for the multiple knapsack problem with assignment restrictions (MKAR) \cite{dawande2000approximation}. However, we want to point out that there is a key difference between our studied VPCA problem and the MKAR problem: in the VPCA problem, a flow can be split and assigned to more than one VNF-node, while in the MKAR problem, an item (corresponding to a flow in our problem) cannot be split and must be assigned to at most one knapsack (corresponding to VNF-node in our problem). Because of this key difference, an optimal solution for the VPCA problem generally has a larger value compared to that of the MKAR problem. Therefore, the algorithms developed for the MKAR problem need to be modified so as to yield a better performance. 
    
    First, we introduce some additional notations for the algorithms that will be described below. We use $\U_f$ to denote the nodes on the path of flow $f$ that are included in $\U$, i.e., $\U_f = \V_{f} \cap \U$. Let $c^{\prime}_v$ denote the remaining capacity of VNF-node $v$, and let $c^{\prime}_{\U_i}$ denote the total remaining capacity of the set of VNF-nodes in $\U_i$, i.e., $c^{\prime}_{\U_i} = \sum_{v \in \U_i} c^{\prime}_{v}$. In what follows, we will introduce the MCA algorithm and the GCA algorithm.
    
    \subsubsection{Maximum-flow-based Capacity Allocation (MCA)}
    \label{subsubsec:MCA}
    
    \begin{algorithm}[!t]
        \caption{ The MCA algorithm}
        \label{alg:MCA}
        \begin{algorithmic}[1]
            \Statex Input: set of VNF-nodes $\U$, set of flows $\F_\U$, flow rates, and VNF-node capacities.
            \Statex Output: Capacity allocation $\boldsymbol{\lambda}$.
            \Statex \textbf{Phase I}:
            \State Obtain a basic optimal solution $\boldsymbol{\lambda}_{\U}$;
            \State $y_f^v \triangleq  \lambda_{f}^v/\lambda_{f}$, for all $\lambda_{f}^v$ in  $\boldsymbol{\lambda}_{\U}$;
            \State Assign each flow $f$ with $y_f^v=1$ to VNF-node $v$;
            \State Construct $G^\prime$ for the unassigned flows with positive $y_f^v$;
            \While {$G^\prime$ is not empty}
            \While {there is a singleton VNF-node in $G^\prime$}
            \State Perform the rounding in Step 1;
            \EndWhile
            \State Perform the rounding in Step 2;
            \EndWhile
            \Statex \textbf{Phase II}:
            \For {each flow $f$ in $\F_\U$ that is not assigned yet}
            \If {$c^\prime_{\U_f} \geq \lambda_{f}$ }
            \State Assign flow $f$ to a subset of VNF-nodes in $\U_f$;
            \State Update the remaining capacity of VNF-nodes $\U_f$;
            \EndIf
            \EndFor
            
        \end{algorithmic}
    \end{algorithm}
    
    We first present the MCA algorithm (Algorithm \ref{alg:MCA}), a capacity allocation algorithm based on the network flow formulation. The MCA algorithm has two phases. In Phase I, MCA makes allocation decisions by rounding a fractional flow assignment obtained by solving problem \eqref{eq:relaxedAllocation}; in Phase II, the remaining VNF-node capacities are allocated in a greedy manner. 
    
    Phase I: Let $\boldsymbol{\lambda}_{\U}$ be a flow assignment obtained from an optimal basic solution\footnote{A basic feasible solution is a solution that cannot be expressed as a convex combination of two feasible    solutions.} of problem  \eqref{eq:relaxedAllocation}, which can be obtained by solving a maximum flow problem as discussed earlier. We use $ y_f^v \triangleq  \lambda_{f}^v/\lambda_{f}$ to denote the fraction of flow $f$ assigned to VNF-node $v$ in the obtained solution $\boldsymbol{\lambda}_{\U}$. The algorithm begins with a temporary assignment of every flow $f$ with $y_f^v =1$ to the corresponding VNF-node $v$. For the remaining flows, we do the following. Let $G^\prime = (\F^\prime, \V^\prime, \E^\prime)$ be a bipartite graph constructed as follows. For each $\lambda_{f}^v \in \boldsymbol{\lambda}_{\U}$, if $0 < y_f^v < 1$, we add a flow vertex $f$ to the set $\F^\prime$, a VNF-node vertex $v$ to the set $\V^\prime$, and an edge, with weight $y_f^v$, connecting flow vertex $f$ to VNF-node vertex $v$, to the set $\E^\prime$. Note that graph $G^\prime$ cannot have a cycle because $\boldsymbol{\lambda}_{\U}$ is a basic feasible solution \cite[Lemma 5]{dawande2000approximation}. After constructing graph $G^\prime$, we repeatedly apply the following two steps to graph $G^\prime$ until it becomes empty. As a result, the modified flow assignment $y_f^v$ will become either zero or one.
    
    Step 1: For each VNF-node $v \in \V^\prime$ that has only one incident flow $f$ (called a singleton VNF-node), we modify its capacity allocation as follows. Let $r_v$ denote the total amount of flow rates assigned to VNF-node $v$ and let $r^\prime_v$ be the portion of $r_v$ contributed by fully assigned flows. Note that $r_v = r_v^\prime + \lambda_f^v$. If $r^\prime_{v} \geq \lambda_{f}^{v}$, then  we set $y_{f}^{v}$ to zero. Now, VNF-node $v$ has no incident edges to it, so we remove it from $G^\prime$. In this case, the value of solution $\boldsymbol{\lambda}_{\U}$ will be reduced by $\lambda_f^v$, which is no greater than $\frac{1}{2}r_v$. If $r^\prime_{v} < \lambda_{f}^{v}$, then we unassign the flows temporarily assigned to VNF-node $v$ and assign flow $f$ to VNF-node $v$ instead, i.e., set $y_f^v$ to one, and cancel the other fractions of flow $f$ assigned to other VNF-nodes. This is feasible because the rate of any flow is assumed to be no larger than the minimum VNF-node capacity. Then, we remove VNF-node $v$, flow $f$, and the associated edges from $G^\prime$. In this case, the value of solution $\boldsymbol{\lambda}_{\U}$ will be reduced by at most $r_v^\prime$, which is no greater than $\frac{1}{2}r_v$. We repeat Step 1 until no singleton VNF-node exists. Then, we go to Step 2.
    
    Step 2: In this step, we will perturb the fractional values of some edges in $G^\prime$ to make one of them either zero or one. The perturbation is designed such that the capacity and assignment constraints are not violated and the total assigned traffic remains the same. We describe the perturbation procedure in the following. Consider a VNF-node $v_1 \in \V^\prime$ that has a degree of at least two. Let $(v_1, f_1)$ and $(v_1, f_{k+1})$ denote two of the incident edges to VNF-node $v_1$. Let $\boldsymbol{p_1}$ and $\boldsymbol{p_2}$ denote the longest paths starting from VNF-node $v_1$ through edges $(v_1, f_1)$ and $(v_1, f_{k+1})$, respectively; such paths exist because $G^\prime$ is a forest. Here, we use $y_i^j$ to denote the fractional value of flow $i$ assigned to VNF-node $j$ and use $\lambda_j$ to denote the rate of flow $j$. Let $\boldsymbol{y_1} = (y_1^1, y_1^2, \dots, y_k^k)$ denote the fractional flow assignment on the edges of path $\boldsymbol{p_1}$, and let $f_1, \dots, f_k$ be the flow nodes of path $\boldsymbol{p_1}$. Similarly, let $\boldsymbol{y_2} = (y_{k+1}^{1}, y_{k+1}^{k+1}, \dots, y_{k+l}^{k+l-1})$ denote the fractional flow assignment on the edges of path $\boldsymbol{p_2}$, and let $f_{k+1}, \dots, f_{k+l}$ denote the flow nodes of path $\boldsymbol{p_2}$. We perturb $\boldsymbol{y_1}$ by adding to it $\boldsymbol{y^\prime_1} = (\frac{\lambda_{k}}{\lambda_{1}} \epsilon, - \frac{\lambda_{k}}{\lambda_{1}} \epsilon, \frac{\lambda_{k}}{\lambda_{2}} \epsilon, \dots, - \frac{\lambda_{k}}{\lambda_{k-1}} \epsilon, \epsilon)$, and we perturb $\boldsymbol{y_2}$ by adding to it $\boldsymbol{y^\prime_2} = (-\frac{\lambda_k}{\lambda_{k+1}} \epsilon, \frac{\lambda_k}{\lambda_{k+1}} \epsilon, - \frac{\lambda_k}{\lambda_{k+2}} \epsilon, \dots,  \frac{\lambda_k}{\lambda_{k+l-1}} \epsilon, -\frac{\lambda_k}{\lambda_{k+l}} \epsilon)$. We increase $\epsilon$ until one fractional value $y_f^v$ becomes zero or one, and if one, i.e., $y_f^v=1$, then we assign flow $f$ to the corresponding VNF-node $v$. An example to illustrate this step is shown in Fig. \ref{fig:perturbation}. In this new solution, at least one edge is removed from $G^\prime$. We repeat the perturbation procedure until at least one VNF-node becomes a singleton, and then we go back to Step 1. If $G^\prime$ becomes empty, we start Phase II.
    
    Phase II: We leverage the property that the traffic of a flow can be split and processed at multiple VNF-nodes. That is, after Phase I, we pick an unassigned flow $f$ and check if the total remaining capacity of VNF-nodes $\U_{f}$ is no smaller than $\lambda_{f}$. If so, we split flow $f$ so that the remaining capacities of some VNF-nodes in $\U_{f}$ can be used to fully process flow $f$ and assign flow $f$ to a subset of these VNF-nodes. Then, we update the remaining capacities of VNF-nodes $\U_{f}$. We repeat this  procedure until no more flow can be assigned.

\def\x{-2}
\def\y{-1}
\def\step{1.5}
\begin{figure}
    \centering
    \LARGE
    \resizebox{0.99\linewidth}{!}{
        
        \begin{tikzpicture}[transform shape]
        
        \begin{scope}[auto, every node/.style={minimum size=.5em}]
        \node[vertex](v) at (\x, \y) {};
        \node at (\x+0.6, \y) {$v_1$};
        \node at (\x+1.8, \y) {$ + $};
        \node[vertex](v1) at (\x, \y+\step) {};
        \node at (\x+0.6, \y+\step) {$ v_2 $};
        \node at (\x, \y+1.7*\step) {$ \vdots $};
        \node [minimum size=1.8em] (t1)at (\x-1.8, \y+1.8*\step) {};
        \node[vertex](vi) at (\x, \y+2.2*\step) {};
        \node at (\x+0.6, \y+2.2*\step) {$v_k$};
        \node[vertex](vi1) at (\x, \y-1*\step) {};
        \node at (\x+0.8, \y-1*\step) {$v_{k+1}$};
        \node at (\x, \y-1.4*\step) {$ \vdots $};
        \node[minimum size=1.8em] (t2)at (\x-1.8, \y-1.7*\step) {};
        \node[vertex](vij) at (\x, \y-2*\step) {};
        \node at (\x+0.8, \y-2*\step) {$v_{k+l}$};
        \def\x{-1.5}
        \node[vertex](f1) at (\x-4, \y+0.5*\step) {};
        \node at (\x-4.6, \y+0.5*\step) {$f_1$};
        \node[vertex](f2) at (\x-4, \y+1.5*\step) {};
        \node at (\x-4.6, \y+1.5*\step) {$ f_2 $};
        \node at (\x-4, \y+2.1*\step) {$ \vdots $};
        \node at (\x-2.3, \y+1.9*\step) {$ \vdots $};

        \node[vertex](fk) at (\x-4, \y+2.5*\step) {};
        \node at (\x-4.6, \y+2.5*\step) {$f_k$};
        \node [red] at (\x-5.2, \y+2.8*\step) {$\boldsymbol{p_1}$};
        \node[vertex](fk1) at (\x-4, \y-0.7*\step) {};
        \node at (\x-4.8, \y-0.7*\step) {$f_{k+1}$};
        \node [red] at (\x-5.2, \y-.2*\step) {$\boldsymbol{p_2}$};
        \node[vertex](fk2) at (\x-4, \y-1.5*\step) {};
        \node at (\x-4.8, \y-1.5*\step) {$ f_{k+2} $};
        \node[]  at (\x-4, \y-1.9*\step) {$\vdots$};
        \node at (\x-2.3, \y-1.6*\step) {$ \vdots $};
        \node[vertex](fkl) at (\x-4, \y-2.5*\step) {};
        \node at (\x-4.8, \y-2.5*\step) {$  f_{k+l} $};
        
        \begin{scope}[every path/.style={-}, every node/.style={inner sep=1pt}]
        \path (v) edge [ anchor= south, near start] node {$y_1^1$} (f1);
        \path (f1) edge [ anchor= south, near start] node {$y_1^2$} (v1);
        \path (v1) edge [ anchor= south, near start] node {$y_2^2$} (f2);
        \path (f2) edge [ anchor= south] node {} (t1);
        \path (t1) edge [ anchor= south] node {} (vi);
        \path (vi) edge [ anchor= south, near start] node {$y_k^k$} (fk);
        
        \path (v) edge [ anchor= south, near end] node {$y_{k+1}^1$} (fk1);
        \path (fk1) edge [ anchor= south, near end] node {$y_{k+1}^{k+1}$} (vi1);
        \path (vi1) edge [ anchor= south, near end] node {$y_{k+2}^{k+1}$} (fk2);
        \path (fk2) edge [ anchor= south] node {} (t2);
        \path (t2) edge [ anchor= south] node {} (vij);
        \path (vij) edge [ anchor= south, near end] node {$y_{k+l}^{k+l}$} (fkl);
        \end{scope}

        \def\x{5}
        \node[vertex](v) at (\x, \y) {};
        \node at (\x+0.6, \y) {$v_1$};
        \node[vertex](v1) at (\x, \y+\step) {};
        \node at (\x+0.6, \y+\step) {$ v_2 $};
        \node at (\x, \y+1.7*\step) {$ \vdots $};
        \node [minimum size=1.8em](t1)at (\x-1.8, \y+1.8*\step) {};
        \node[vertex](vi) at (\x, \y+2.2*\step) {};
        \node at (\x+0.6, \y+2.2*\step) {$v_k$};
        \node[vertex](vi1) at (\x, \y-1*\step) {};
        \node at (\x+0.8, \y-1*\step) {$v_{k+1}$};
        \node at (\x, \y-1.4*\step) {$ \vdots $};
        \node[minimum size=1.8em] (t2)at (\x-1.8, \y-1.7*\step) {};
        \node[vertex](vij) at (\x, \y-2*\step) {};
        \node at (\x+0.8, \y-2*\step) {$v_{k+l}$};
        \def\x{5.5}
        \node[vertex](f1) at (\x-4, \y+0.5*\step) {};
        \node at (\x-4.6, \y+0.5*\step) {$f_1$};
        \node[vertex](f2) at (\x-4, \y+1.5*\step) {};
        \node at (\x-4.6, \y+1.5*\step) {$ f_2 $};
        \node at (\x-4, \y+2.1*\step) {$ \vdots $};
        \node at (\x-2.3, \y+1.9*\step) {$ \vdots $};        \node[vertex](fk) at (\x-4, \y+2.5*\step) {};
        \node at (\x-4.6, \y+2.5*\step) {$f_k$};
        \node[vertex](fk1) at (\x-4, \y-0.7*\step) {};
        \node at (\x-4.8, \y-0.7*\step) {$f_{k+1}$};
        \node[vertex](fk2) at (\x-4, \y-1.5*\step) {};
        \node at (\x-4.8, \y-1.5*\step) {$ f_{k+2} $};
        \node[]  at (\x-4, \y-1.9*\step) {$\vdots$};
        \node at (\x-2.3, \y-1.6*\step) {$ \vdots $};        \node[vertex](fkl) at (\x-4, \y-2.5*\step) {};
        \node at (\x-4.8, \y-2.5*\step) {$  f_{k+l} $};

        \begin{scope}[every path/.style={-}, every node/.style={inner sep=1pt}]
        \path (v) edge [ anchor= south, near start] node {$\frac{\lambda_{k}}{\lambda_{1}} \epsilon$} (f1);
        \path (f1) edge [ anchor= south, near start] node {$- \frac{\lambda_{k}}{\lambda_{1}} \epsilon$} (v1);
        \path (v1) edge [ anchor= south, near start] node {$\frac{\lambda_{k}}{\lambda_{2}} \epsilon$} (f2);
        \path (f2) edge [ anchor= south] node {} (t1);
        \path (t1) edge [ anchor= south] node {} (vi);        
        \path (vi) edge [ anchor= south, near start] node {$\epsilon$} (fk);
        
        \path (v) edge [ anchor= south, near end] node {$-\frac{\lambda_k}{\lambda_{k+1}} \epsilon$} (fk1);
        \path (fk1) edge [ anchor= south, near end] node {$\frac{\lambda_k}{\lambda_{k+1}} \epsilon$} (vi1);
        \path (vi1) edge [ anchor= south, near end] node {$- \frac{\lambda_k}{\lambda_{k+2}} \epsilon$} (fk2);
        \path (fk2) edge [ anchor= south] node {} (t2);
        \path (t2) edge [ anchor= south] node {} (vij);
        \path (vij) edge [ anchor= south, near end] node {$-\frac{\lambda_k}{\lambda_{k+l}} \epsilon$} (fkl);
        \end{scope}
        \end{scope}
        
        \draw[red,thin,dashed] (\x-2,\y+2.2) ellipse  (3.7cm and 2.2cm);
        \draw[red,thin,dashed] (\x-2,\y-2.2) ellipse  (3.7cm and 2.2cm);
        \end{tikzpicture}}
    
    \caption{ An example of the edges perturbation}
    \label{fig:perturbation}
\end{figure}
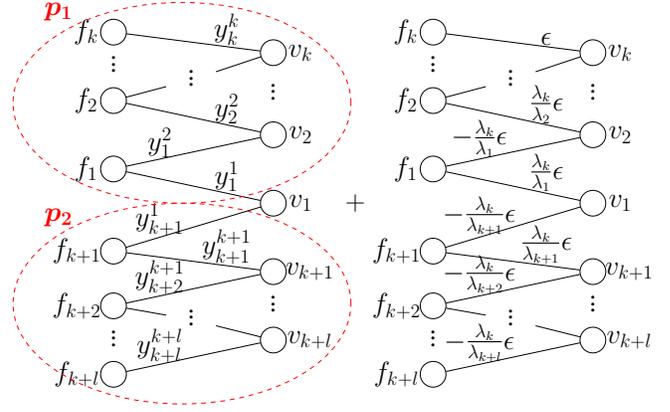

    We use $OPT(\text{\ref{eq:relaxedAllocation}}, \U)$ to denote the total traffic assigned to  a given set of VNF-nodes $\U$ by an optimal solution to problem \eqref{eq:relaxedAllocation}. Also, we use $\pi_{\text{MCA}}^\U$ to denote the total traffic assigned to VNF-nodes $\U$ by the MCA algorithm. The approximation ratio of the MCA algorithm is stated in the following Lemma.    \begin{lemma}
        \label{lemma:MCAapproximation}
        The MCA algorithm has an approximation ratio of $1/2$, i.e.,  $\pi_\text{MCA}^\U \geq \frac{1}{2} OPT(\text{\ref{eq:relaxedAllocation}}, \U)$. 
    \end{lemma}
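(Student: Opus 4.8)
The plan is to track how much traffic is lost at each stage of the MCA algorithm, relative to the optimal value $OPT(\text{\ref{eq:relaxedAllocation}}, \U)$ of the relaxed allocation subproblem, and show that no more than half of it is lost. First I would note that the basic optimal solution $\boldsymbol{\lambda}_{\U}$ obtained in Phase~I carries exactly $OPT(\text{\ref{eq:relaxedAllocation}}, \U)$ units of traffic, and that flows with $y_f^v=1$ are assigned permanently without any loss; so the only potential loss occurs while processing the bipartite forest $G^\prime$ built on the strictly fractional assignments, plus whatever traffic is recovered in Phase~II. The key quantity is, for each singleton VNF-node $v$ handled in Step~1, the amount $r_v$ of fractional-and-temporary traffic routed to $v$: the two branches of Step~1 (either drop the single fractional flow $f$, losing $\lambda_f^v$, or promote $f$ to $v$ and cancel the temporarily-assigned fully-assigned flows, losing $r_v^\prime$) each lose at most $\frac12 r_v$, because $r_v = r_v^\prime + \lambda_f^v$ and exactly one of the two summands is the smaller one. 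Step~2 (the perturbation) is loss-free by construction: it preserves both feasibility and total assigned traffic while forcing some edge to $0$ or $1$, so it only accelerates termination and never changes the value.

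The main subtlety — and what I expect to be the central obstacle — is bounding the \emph{total} loss across all invocations of Step~1 by $\frac12 OPT$. The naive worry is that the same capacity on a VNF-node might be "charged" more than once as the algorithm alternates between Step~1 and Step~2. I would resolve this by arguing that the relevant $r_v$ values are charged against disjoint portions of the capacity used by $\boldsymbol{\lambda}_{\U}$: when a singleton VNF-node $v$ is processed and removed from $G^\prime$, the capacity it consumed (both the fractional part and the fully-assigned part temporarily sitting on it) is never revisited, because $v$ is deleted from the graph. Hence $\sum_v r_v \le$ (the portion of $OPT(\text{\ref{eq:relaxedAllocation}}, \U)$ routed through nodes that ever become singletons) $\le OPT(\text{\ref{eq:relaxedAllocation}}, \U)$, and the total loss is $\sum_v \frac12 r_v \le \frac12 OPT(\text{\ref{eq:relaxedAllocation}}, \U)$. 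One must be careful that traffic temporarily assigned to a node $v$ but later cancelled (the "fully-assigned flows" cancelled in the second branch of Step~1) is not double-counted when that flow is re-examined; since such a flow, once cancelled, is treated as unassigned and only re-enters via Phase~II (which strictly increases the assigned total), this causes no additional loss.

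Putting the pieces together: let $L$ denote the total traffic lost relative to $\boldsymbol{\lambda}_{\U}$ by the end of Phase~I. The argument above gives $L \le \frac12 OPT(\text{\ref{eq:relaxedAllocation}}, \U)$, so the traffic assigned after Phase~I is at least $OPT(\text{\ref{eq:relaxedAllocation}}, \U) - L \ge \frac12 OPT(\text{\ref{eq:relaxedAllocation}}, \U)$. Since Phase~II only assigns additional flows and never unassigns any, it can only increase the total assigned traffic, so
\[
\pi_{\text{MCA}}^\U \ \ge\ OPT(\text{\ref{eq:relaxedAllocation}}, \U) - L \ \ge\ \tfrac12\, OPT(\text{\ref{eq:relaxedAllocation}}, \U),
\]
which is the claimed bound. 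I would also remark that this mirrors the $\frac12$-approximation analysis for MKAR in \cite{dawande2000approximation}, the difference being that flow-splitting in Phase~II only helps, so the same ratio carries over; the only genuinely new bookkeeping is ensuring the "$\le \frac12 r_v$ per singleton, on disjoint capacity" accounting is valid in the split-flow setting.
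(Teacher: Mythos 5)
Your proposal is correct and follows essentially the same argument as the paper's proof: each invocation of Step~1 loses at most $\tfrac12 r_v$ because $r_v = r_v^\prime + \lambda_f^v$, Step~2 is value-preserving, the losses are charged to disjoint nodes so that $\sum_v r_v \le OPT(\text{\ref{eq:relaxedAllocation}}, \U)$, and Phase~II can only increase the assigned traffic. If anything, you spell out the disjoint-charging step more explicitly than the paper does.
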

    \begin{proof}
        See Appendix~\ref{proof:MCAapproximation1}.
    \end{proof}
    \subsubsection{Greedy Capacity Allocation (GCA)}
    \label{subsubsec:GCA} 
    While the MCA algorithm achieves an approximation ratio of $1/2$, it has a relatively high complexity of $O(F^2V^2)$ (refer to Table~\ref{table:complexity} for the complexity analysis). This high complexity may render the MCA algorithm unsuitable for certain scenarios in practice. To that end, we propose the GCA algorithm, a simple greedy capacity allocation algorithm that has a much lower complexity of $O(FV)$. A lower complexity  of the GCA algorithm is achieved at the cost of a slightly worse approximation ratio of $1/3$ (Lemma \ref{lemma:GCAapproximation1}). However, the approximation ratio of the GCA algorithm can be improved to $2/5$ (cf. Lemma \ref{lemma:GCAapproximation2} below) if an additional mild assumption (Assumption \ref{assump:equal_capacity}) holds.  The GCA algorithm has two phases. In Phase I, we sort flows of $\F_\U$ in a nonincreasing order  of their flow rates. Then, we iteratively go through the sorted list and assign each flow to any VNF-node in $\U_f$ if it has a sufficient capacity. In Phase II, the remaining capacities of the VNF-nodes can be allocated in a similar way to Phase II of the MCA algorithm  by leveraging the property that a flow can be processed at multiple VNF-nodes. However, here the remaining flows need to be considered according to the order in the sorted list $\F_\U$. The GCA algorithm is presented in Algorithm \ref{alg:GCA}.
 
     \begin{algorithm}[t]
        \caption{The GCA algorithm}
        \label{alg:GCA}
        \begin{algorithmic}[1]
            \Statex Input: set of VNF-nodes $\U$, set of flows $\F_\U$, flow rates, and VNF-node capacities.
            \Statex Output: Capacity allocation $\boldsymbol{\lambda}$.
            \State Sort flows $\F_\U$ in a noincreasing order of their flow rates;
            \Statex \textbf{Phase I}:
            \For {each flow $f$ in the sorted set $\F_\U$}
            \If {there is a VNF-node $v$ in $\U_f$ such that $c^\prime_v \geq \lambda_{f}$}
            \State Set $\lambda_{f}^v=\lambda_{f}$;
            \State Set $c^\prime_v = c^\prime_v -\lambda_{f}$;
            \EndIf
            \EndFor
            \Statex \textbf{Phase II}:
            \For {each flow $f$ in the sorted set $\F_\U$ that is not assigned yet}
            \If {$c^\prime_{\U_f} \geq \lambda_{f}$ }
            \State Assign flow $f$ to a subset of VNF-nodes in $\U_f$;
            \State Update the remaining capacity of VNF-nodes $\U_f$;
            \EndIf
            \EndFor
        \end{algorithmic}
    \end{algorithm}
    
    In Lemma \ref{lemma:GCAapproximation1}, we state the result about the approximation ratio of the GCA algorithm. We use $\pi_{\text{GCA}}^\U$ to denote the total traffic assigned to VNF-nodes $\U$ by the GCA algorithm.
    \begin{lemma}
        \label{lemma:GCAapproximation1}
        The GCA algorithm has an approximation ratio of 1/3, i.e., $\pi_\text{GCA}^\U \geq \frac{1}{3} OPT(\text{\ref{eq:relaxedAllocation}}, \U)$.
    \end{lemma}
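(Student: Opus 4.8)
The plan is to compare the traffic $\pi_{\text{GCA}}^\U$ produced by the GCA algorithm against $OPT(\text{\ref{eq:relaxedAllocation}},\U)$ by a charging/accounting argument that tracks how the node capacities are consumed. First I would fix a set of VNF-nodes $\U$ and let $\text{OPT}$ denote an optimal (possibly fractional) assignment for \eqref{eq:relaxedAllocation}; by Lemma~\ref{lemma:Q2MaxFlow} this equals $R_3(\U)$. The key structural observation is that after Phase~I of GCA terminates, for every VNF-node $v\in\U$ the remaining capacity $c'_v$ is strictly less than $\lambda_f$ for every \emph{unassigned} flow $f$ with $v\in\U_f$ — otherwise Phase~I would have assigned $f$ to $v$. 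Combined with the standing assumption that each flow rate is no larger than the smallest node capacity, this means that when an unassigned flow $f$ is skipped, each node $v\in\U_f$ has already received more than $c_v-\lambda_f$ units of committed (integrally assigned, fully-processed) traffic, hence at least $c_v/2$ is roughly the worst case — but I would phrase it more carefully since $\lambda_f \le c_v$ only gives $c'_v < \lambda_f \le c_v$, so the committed amount on $v$ exceeds $c_v - \lambda_f \ge 0$, which is not yet enough. The trick (as in the MKAR-style argument of \cite{dawande2000approximation}) is the nonincreasing sorting: when $f$ is skipped in Phase~I, every flow already assigned to a node in $\U_f$ has rate $\ge \lambda_f$, so each such node carries either committed traffic exceeding $c_v-\lambda_f$ \emph{with all contributing flows of rate $\ge \lambda_f$}, which after accounting for Phase~II gives the bound.

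Concretely, I would split the analysis into the two phases. Let $A$ be the set of flows fully assigned in Phase~I and let $P$ be the set assigned in Phase~II; let $S$ be the flows never assigned. Phase~II guarantees that for every $f\in S$ we have $c'(\U_f)<\lambda_f$ at the end, i.e., the total leftover capacity on $f$'s nodes is less than $\lambda_f$. Summing a suitably weighted version of these inequalities over $S$ and over the nodes, I would bound the ``lost'' optimal traffic $OPT - \pi_{\text{GCA}}^\U$ in terms of the capacity actually filled by GCA. The nonincreasing order ensures that whenever GCA fills a node it does so with a flow at least as large as any later flow competing for that node, which caps the waste per node by (roughly) one flow rate; since every node's capacity is at least one flow rate, the filled fraction on each ``saturated-enough'' node is at least $1/2$ of what OPT could put there, and the two phases together upgrade this to the claimed $1/3$ against the \emph{global} optimum (the extra factor coming from the fact that OPT may route a given flow across several nodes while GCA commits it to one). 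I would make this precise by the inequality $OPT \le \pi_{\text{GCA}}^\U + \sum_{f\in S}\lambda_f$ and then $\sum_{f\in S}\lambda_f \le 2\pi_{\text{GCA}}^\U$, the latter from the per-node waste bound.

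The main obstacle I expect is getting the constant exactly right in the per-node charging: one must carefully distinguish the capacity consumed by Phase~I assignments from that consumed by Phase~II splitting, and argue that a skipped flow $f\in S$ can be ``blamed'' on traffic of total size at least $2\lambda_f$ already placed on $\U_f$, without double-counting when several skipped flows share nodes. The cleanest route is probably to orient the blame along the nodes: assign to each $f\in S$ a witness node $v(f)\in\U_f$ and show each node is the witness of a set of skipped flows whose total rate is at most twice the committed traffic on that node. Handling the case where Phase~II partially consumes a node's leftover capacity — so that the ``leftover $<\lambda_f$'' bound must be applied at the right moment — is the delicate bookkeeping step; I would state it as a claim about the state of the algorithm immediately after Phase~I and then note Phase~II only increases $\pi_{\text{GCA}}^\U$, never invalidating the bound. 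The remaining inequalities are routine summations, so I would defer the full details to the appendix, mirroring the structure (but not the constants, because of flow splitting) of \cite[Section~3]{dawande2000approximation}.
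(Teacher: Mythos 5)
Your structural observation is the right one and coincides with the paper's key lemma: because flows are considered in nonincreasing order of rate, every node $u$ lying on the path of some flow that is still unassigned at the end must satisfy $r_u \ge \tfrac{1}{2}c_u$ after Phase~I (if $u$ were less than half full, the skipped flow would have rate exceeding the leftover and hence exceeding $\tfrac{1}{2}c_u$, and the sort order would have placed it on $u$ before the smaller flows occupying it). The gap is in your final accounting. The chain $OPT \le \pi_{\text{GCA}}^\U + \sum_{f\in S}\lambda_f$ followed by $\sum_{f\in S}\lambda_f \le 2\pi_{\text{GCA}}^\U$ (with $S$ your set of never-assigned flows) does not go through: the second inequality is false. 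Take a single node $u$ of capacity $c$ and $m$ flows of rate $0.6c$ whose paths meet $\U$ only at $u$; GCA assigns exactly one of them, so $\pi_{\text{GCA}}^\U = 0.6c$, while $\sum_{f\in S}\lambda_f = 0.6c\,(m-1)$ is unbounded. The witness-node charging you propose to rescue this fails for the same reason: a node can be the unique candidate for arbitrarily many skipped flows, so there is no bound on the total rate of the flows it witnesses in terms of its committed traffic.

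The paper avoids this by never summing the rates of the skipped flows. Instead, it bounds their contribution to the optimum by the total \emph{capacity} of the nodes they can use: letting $\U' = \bigcup_{f\in S}\U_f$, any feasible assignment must route all traffic of the skipped flows to nodes of $\U'$, so that contribution is at most $c_{\U'} = \sum_{u\in\U'}c_u$, a quantity independent of $|S|$. Combining with the half-capacity lemma gives $c_{\U'} \le 2 r_{\U'} \le 2\pi_{\text{GCA}}^\U$, and hence $OPT(\text{\ref{eq:relaxedAllocation}},\U) \le \pi_{\text{GCA}}^\U + c_{\U'} \le 3\pi_{\text{GCA}}^\U$. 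Your argument becomes correct once you replace $\sum_{f\in S}\lambda_f$ by $c_{\U'}$ in the first inequality; no per-flow charging or witness bookkeeping is then needed.
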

    \begin{proof}
        See Appendix~\ref{proof:GCAapproximation1}.
    \end{proof}
    Further, we show in Lemma \ref{lemma:GCAapproximation2} that the approximation ratio of the GCA algorithm can be improved to $2/5$ when an additional mild assumption (Assumption \ref{assump:equal_capacity}) holds.
    \begin{assumption}
        \label{assump:equal_capacity}
        Assume that all the VNF-nodes in $\U$ have the same capacity and that every flow $f$ in $\F_\U$  traverses at least two VNF-nodes in $\U$, i.e., $|\V_f \cap \U| \geq 2$. 
    \end{assumption}
    \begin{lemma}
        \label{lemma:GCAapproximation2}
        Suppose that Assumption \ref{assump:equal_capacity} holds. Then, the GCA algorithm has an improved approximation ratio of $2/5$, i.e.,  $\pi_\text{GCA}^\U \geq \frac{2}{5} OPT(\text{\ref{eq:relaxedAllocation}}, \U)$. 
    \end{lemma}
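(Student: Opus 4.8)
I write $\mathrm{OPT}$ for $OPT(\text{\ref{eq:relaxedAllocation}},\U)$, $\text{ALG}$ for $\pi_{\text{GCA}}^\U$, $C$ for the common VNF-node capacity (Assumption \ref{assump:equal_capacity}), $c'_v=C-\text{alg}_v$ for the capacity GCA leaves unused at $v$, $A\subseteq\F_\U$ for the flows GCA assigns (each in full, whether in Phase I or II) and $B=\F_\U\setminus A$ for the ones it leaves unassigned; $\text{opt}_v,\text{opt}_f$ denote the amounts an optimal solution of \eqref{eq:relaxedAllocation} puts on node $v$ and on flow $f$. If $B=\emptyset$ then $\text{ALG}=\sum_{f\in\F_\U}\lambda_f\ge\mathrm{OPT}$ and we are done, so assume $B\neq\emptyset$. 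The proof rests on two upper bounds for $\mathrm{OPT}$: the capacity bound $\mathrm{OPT}\le|\U|C$, and, since $\text{opt}_f\le\lambda_f$ for all $f$, the ``loss'' bound $\mathrm{OPT}=\sum_{f\in A}\text{opt}_f+\sum_{f\in B}\text{opt}_f\le\sum_{f\in A}\lambda_f+\sum_{f\in B}\text{opt}_f=\text{ALG}+\sum_{f\in B}\text{opt}_f$. The whole difficulty is to show that the correction term $\sum_{f\in B}\text{opt}_f$ is at most $\tfrac32\text{ALG}$.

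Here Assumption \ref{assump:equal_capacity} enters. Fix $f\in B$. Because GCA processes flows in nonincreasing order of rate and $f$ is not single-assigned in Phase I, every $v\in\U_f$ holds at least one flow at the moment $f$ is processed (otherwise $c'_v=C\ge\lambda_f$ and $f$ would fit), and each such flow has rate $\ge\lambda_f$; hence $\text{alg}_v\ge\lambda_f$, i.e.\ $c'_v\le C-\lambda_f$, for every $v\in\U_f$. Because $f$ is also not assigned in Phase II, $\sum_{v\in\U_f}c'_v<\lambda_f$. Combining the two, for any distinct $v,u\in\U_f$ we get $2c'_v+c'_u=c'_v+(c'_v+c'_u)<c'_v+\lambda_f\le C$, and such pairs exist since $|\U_f|\ge2$; averaging $2c'_v+c'_u<C$ over all ordered pairs in $\U_f$ yields $\sum_{v\in\U_f}c'_v<\tfrac13|\U_f|C$. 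In words, \emph{GCA saturates more than two thirds of the capacity on the support of every unassigned flow}. Two consequences I would record: every $v$ in $\U_B:=\bigcup_{f\in B}\U_f$ has $c'_v<C/2$ (hence $\text{alg}_v>C/2$), and, adding $2c'_v+c'_u<C$ to $2c'_u+c'_v<C$, any two support nodes of a blocked flow satisfy $c'_v+c'_u<\tfrac23C$, so the support of each $f\in B$ contains at most one node with $c'_v\ge C/3$.

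Finally I would bound $\sum_{f\in B}\text{opt}_f=\sum_{v\in\U_B}o_v$, where $o_v\le\min\{C,\text{opt}_v\}$ is OPT's load on $v$ coming from $B$-flows, by a charging argument. For a node $v\in\U_B$ with $c'_v\le C/3$ the estimate $o_v\le C\le\tfrac32\text{alg}_v$ holds outright; for a ``partial'' node with $c'_v\in(C/3,C/2)$ the surplus $o_v-\tfrac32\text{alg}_v$ is charged to another node $u$ lying with $v$ in the support of some $f\in B$ — necessarily a node with $c'_u<C/3$ — and the pairwise relation $c'_v+c'_u<\tfrac23C$ makes $u$'s reserve $\tfrac32\text{alg}_u-o_u$ strictly dominate $v$'s surplus. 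The main obstacle is that several partial nodes may wish to charge the same $u$ (the ``star'' configuration, in which the local two-thirds saturation on individual supports does not aggregate into a bound on the total unused capacity of $\U_B$): when this happens I would abandon the charging and instead invoke $\mathrm{OPT}\le|\U|C$ together with the observation that only flows of $A$ can be routed by OPT to nodes outside $\U_B$ (so OPT's load on $\U\setminus\U_B$ is $\le\text{ALG}$) and $\text{alg}_v>C/2$ on $\U_B$, which forces $\mathrm{OPT}\le\tfrac52\text{ALG}$ in that regime as well. Assembling the regimes gives $\mathrm{OPT}\le\text{ALG}+\tfrac32\text{ALG}=\tfrac52\text{ALG}$, i.e.\ $\pi_{\text{GCA}}^\U\ge\tfrac25\,OPT(\text{\ref{eq:relaxedAllocation}},\U)$.
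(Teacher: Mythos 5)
Your structural observations are all correct and, up to notation, reproduce the paper's two auxiliary lemmas: your bound $c'_v<C/2$ for every node in $\U_B$ is Lemma~\ref{lemma:halfcapacity}, and your pairwise bound $c'_u+c'_v<\tfrac{2}{3}C$ for two nodes in the support of a blocked flow is exactly Lemma~\ref{lemma:twothirdcapacity}. You have also put your finger on the step that genuinely needs care: the paper's own proof passes in one line from the pairwise bound to the aggregate statement $r_{\U'}\ge\tfrac{2}{3}c_{\U'}$ over the whole union $\U'=\bigcup_{f\in\F'}\U_f$, and it is precisely the star configuration you describe that makes this passage non-obvious (a hub $u$ with $c'_u=0$ shared by many blocked flows, each paired with a leaf $v_i$ whose residual is just below $C/2$, drives the average saturation of $\U'$ below $2/3$, e.g.\ to $5/8$ with three leaves).

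However, your resolution of the star regime is not a proof, and as stated it does not deliver the claimed constant. The fallback you invoke --- $\mathrm{OPT}\le|\U|C$ restricted to $\U_B$, OPT's load outside $\U_B$ bounded by $\mathrm{ALG}$, and $\mathrm{alg}_v>C/2$ on $\U_B$ --- yields only $\mathrm{OPT}\le \mathrm{ALG}+|\U_B|C<\mathrm{ALG}+2\sum_{v\in\U_B}\mathrm{alg}_v\le 3\,\mathrm{ALG}$, which is the $1/3$ ratio of Lemma~\ref{lemma:GCAapproximation1}, not $5/2$. To recover $5/2$ along these lines you would need exactly $\sum_{v\in\U_B}\mathrm{alg}_v\ge\tfrac{2}{3}|\U_B|C$, i.e.\ the aggregate saturation that the star configuration threatens, so the fallback is circular. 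The charging argument for the other regime is likewise only sketched: the charging map is not specified, it is not shown that each low-residual node is charged at most once, and the two regimes are not shown to be exhaustive. What is missing is an ingredient beyond node capacities --- for instance, that a leaf $v_i$ touched by only one blocked flow $f_i$ can receive at most $\lambda_{f_i}$ of $B$-traffic from OPT, which is in turn bounded via $\mathrm{alg}_{v_i}\ge\lambda_{f_i}$; some demand-side bound of this kind is needed to tame the star, and neither your sketch nor the corresponding single line in the paper's proof supplies it. Until that case is closed with an actual inequality, the proposal establishes only the $1/3$ guarantee already available without Assumption~\ref{assump:equal_capacity}.
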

    \begin{proof}
        See Appendix~\ref{proof:GCAapproximation2}.
    \end{proof}
    \subsection{Main Results}
    \label{subsec:mainResult}
    We state our main results in Theorems \ref{theorem:mainResult} and \ref{theorem:mainResult2}.
    \begin{theorem}
        \label{theorem:mainResult}
        The RP-MCA algorithm has an approximation ratio of $\frac{1}{2}(1-1/e)$ for problem \eqref{eq:mainProblem}. 
    \end{theorem}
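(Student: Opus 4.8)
I would prove the bound by composing the three guarantees that make up RP-MCA: (i) passing from the original problem \eqref{eq:mainProblem} to its relaxation \eqref{eq:relaxedProblem} does not shrink the optimal value; (ii) the submodular greedy placement (SG or EG) solves the relaxed placement subproblem \eqref{eq:relaxedPlacement} within a factor $(1-1/e)$; and (iii) the MCA rounding turns the resulting fractional/partial allocation into a feasible \emph{fully-processed} allocation while losing at most a factor $1/2$. Let $(\U^\star,\boldsymbol{\lambda}^\star)$ be an optimal solution of \eqref{eq:mainProblem} and write $\mathrm{OPT}\triangleq J_1(\U^\star,\boldsymbol{\lambda}^\star)$.

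\textbf{Step 1 (relaxation only helps the optimum).} Starting from $\boldsymbol{\lambda}^\star$, I would first trim any over-assignment so that $\sum_{v\in\V_{f}\cap\U^\star}\lambda_f^{\star v}=\lambda_f$ for every flow that is fully processed under $\boldsymbol{\lambda}^\star$; this does not violate the capacity constraint \eqref{eq:nodecapacity} and makes the assignment satisfy the flow-rate constraint \eqref{eq:traffic2}, hence it is feasible for the relaxed allocation problem \eqref{eq:relaxedAllocation} with VNF-node set $\U^\star$. Under this trimmed assignment, each fully-processed flow still contributes its whole rate $\lambda_f$ to $R_1$ and every other flow contributes a nonnegative amount, so $R_3(\U^\star)\ge R_1(\U^\star,\boldsymbol{\lambda}^\star_{\text{trimmed}})\ge J_1(\U^\star,\boldsymbol{\lambda}^\star)=\mathrm{OPT}$. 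Since $\U^\star$ obeys the budget constraint \eqref{eq:budget}, it is feasible for \eqref{eq:relaxedPlacement}, so the optimal value of \eqref{eq:relaxedPlacement} is at least $\mathrm{OPT}$.

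\textbf{Step 2 (placement guarantee).} By Lemma~\ref{lemma:submodularity}, $R_3$ is monotone nondecreasing and submodular, and the constraint in \eqref{eq:relaxedPlacement} is a cardinality constraint (Case~I) or a knapsack constraint (Case~II). In either case Lemma~\ref{lemma:placement} yields a budget-feasible set $\U$ (the output of SG or EG) with $R_3(\U)\ge(1-1/e)\cdot(\text{optimal value of \eqref{eq:relaxedPlacement}})\ge(1-1/e)\,\mathrm{OPT}$; the value $R_3(\cdot)$ needed by the greedy oracle is computable via the network-flow formulation of Lemma~\ref{lemma:Q2MaxFlow}, so this step is algorithmically legitimate.

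\textbf{Step 3 (rounding guarantee and feasibility) and conclusion.} For the fixed set $\U$ from Step~2, by definition $R_3(\U)=OPT(\text{\ref{eq:relaxedAllocation}},\U)$, so Lemma~\ref{lemma:MCAapproximation} gives a capacity allocation $\boldsymbol{\lambda}$ produced by MCA with $\pi_{\text{MCA}}^{\U}\ge\tfrac12\,OPT(\text{\ref{eq:relaxedAllocation}},\U)=\tfrac12 R_3(\U)$. I would then check that $(\U,\boldsymbol{\lambda})$ is feasible for \eqref{eq:mainProblem}: MCA never exceeds any VNF-node's capacity (so \eqref{eq:nodecapacity} holds) and $\U$ respects the budget \eqref{eq:budget}; moreover, in both phases MCA commits each assigned flow \emph{in its entirety}, so every assigned flow is fully processed and therefore $J_1(\U,\boldsymbol{\lambda})=\pi_{\text{MCA}}^{\U}$. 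Chaining the three inequalities gives $J_1(\U,\boldsymbol{\lambda})\ge\tfrac12 R_3(\U)\ge\tfrac12(1-1/e)\,\mathrm{OPT}$, which is the claimed approximation ratio.

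\textbf{Main obstacle.} Steps~2 and~3 are essentially direct appeals to Lemmas~\ref{lemma:placement} and~\ref{lemma:MCAapproximation}; the part that requires genuine care is the ``bookkeeping'' at the two interfaces of the relaxation. On one side (Step~1) one must argue cleanly that relaxing ``fully processed'' to ``partially processed'' cannot decrease the optimum, so that the $(1-1/e)$ factor is measured against a benchmark no smaller than $\mathrm{OPT}$; on the other side (Step~3) one must argue that the MCA output, although derived from a partial/fractional solution of \eqref{eq:relaxedAllocation}, consists solely of fully-processed flows, so that its total traffic is counted by the true objective $J_1$ and not merely by the relaxed objective $R_1$. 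Getting both of these matching up correctly is where the proof needs to be precise.
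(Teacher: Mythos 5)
Your proposal is correct and follows essentially the same route as the paper: lower-bound the relaxed optimum by $\mathrm{OPT}$, apply the $(1-1/e)$ greedy guarantee of Lemma~\ref{lemma:placement} to the submodular relaxed placement subproblem, and then lose a factor $1/2$ via Lemma~\ref{lemma:MCAapproximation}. Your Steps~1 and~3 merely make explicit two bookkeeping points the paper's proof treats implicitly (that trimming an optimal assignment of \eqref{eq:mainProblem} yields a feasible point of the relaxation despite the added constraint \eqref{eq:traffic2}, and that every flow MCA assigns is fully processed so its output is counted by $J_1$), which is a welcome but not substantively different refinement.
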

    \begin{proof}
         The RP-MCA algorithm has two main components: 1) VNF-nodes placement and 2) capacity allocation. For the relaxed placement subproblem \eqref{eq:relaxedPlacement}, we use $\pi_{G}^\U$ to denote the value of the optimal relaxed allocation for the set of VNF-nodes $\U$ selected by the SG algorithm or the EG algorithm. Also, we use $OPT(P)$ to denote the optimal value of any problem $(P)$. We have the following result:
        \begin{equation}  
        \label{eq:placementresult2}      
        \begin{aligned}
        \pi_{G}^\U&  \stackrel{\text{(a)}}{\geq}  (1-1/e)OPT(\text{\ref{eq:relaxedPlacement}})\\
        & \stackrel{\text{(b)}}{=} (1-1/e)OPT(\text{\ref{eq:relaxedProblem}}) \\
        & \stackrel{\text{(c)}}{\geq} (1-1/e)OPT(\text{\ref{eq:mainProblem}}),
        \end{aligned}
        \end{equation}
        where (a) is due to Lemma \ref{lemma:placement}, (b) holds because an optimal capacity allocation is assumed for the objective function of problem \eqref{eq:relaxedPlacement}, and (c) holds because problem \eqref{eq:relaxedProblem} is a relaxed version of problem \eqref{eq:mainProblem}. 
        
        The second component of the RP-MCA algorithm is the capacity allocation using the MCA algorithm for the set of VNF-nodes $\U$ selected by the SG or EG algorithm. We have the following result:
        \begin{equation}
        \label{eq:mainTheorem}
        \begin{aligned}
        \pi_\text{MCA}^\U & \stackrel{\text{(a)}}{\geq} \frac{1}{2}OPT(\text{\ref{eq:relaxedAllocation}}, \U) \\
        & \stackrel{\text{(b)}}{=} \frac{1}{2}\pi_{G}^\U \\
        & \stackrel{\text{(c)}}{\geq} \frac{1}{2} (1-1/e)OPT(\text{\ref{eq:mainProblem}}),
        \end{aligned}
        \end{equation} 
        where (a) comes from the approximation ratio of the MCA algorithm in Lemma \ref{lemma:MCAapproximation}, (b) holds because when $\pi^{\U}_{G}$ is obtained for problem \eqref{eq:relaxedPlacement} using the greedy algorithms, an optimal capacity allocation (i.e., an optimal solution to problem \eqref{eq:relaxedAllocation} associated with the considered $\mathcal{U}$) is assumed for the objective function, and (c) holds from Eq. \eqref{eq:placementresult2}. 
        Therefore, the result of Theorem \ref{theorem:mainResult} follows.
        \end{proof}
        
    \begin{theorem}
        \label{theorem:mainResult2}
        The RP-GCA algorithm has an approximation ratio of $\frac{1}{3}(1-1/e)$ for problem \eqref{eq:mainProblem}. 
    \end{theorem}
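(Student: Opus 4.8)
The plan is to mirror the argument used for Theorem~\ref{theorem:mainResult}, replacing the guarantee of the MCA algorithm with that of the GCA algorithm. As before, the RP-GCA algorithm decomposes into two components: (1) solving the relaxed placement subproblem \eqref{eq:relaxedPlacement} via the SG or EG algorithm to obtain a set of VNF-nodes $\U$, and (2) running the GCA algorithm to produce a feasible capacity allocation for the original (non-relaxed) problem. I would first invoke exactly the chain \eqref{eq:placementresult2}, which does not depend on the capacity-allocation step: letting $\pi_G^\U$ denote the optimal relaxed allocation value for the placement $\U$ returned by the greedy placement algorithm, Lemma~\ref{lemma:placement} gives $\pi_G^\U \geq (1-1/e)\,OPT(\text{\ref{eq:relaxedPlacement}})$, this equals $(1-1/e)\,OPT(\text{\ref{eq:relaxedProblem}})$ since the placement objective $R_3$ already assumes an optimal capacity allocation, and this is at least $(1-1/e)\,OPT(\text{\ref{eq:mainProblem}})$ because \eqref{eq:relaxedProblem} relaxes \eqref{eq:mainProblem}.

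The second step is to bound the traffic actually delivered by GCA. Here I would use Lemma~\ref{lemma:GCAapproximation1} in place of Lemma~\ref{lemma:MCAapproximation}, writing
\begin{equation}
\begin{aligned}
\pi_\text{GCA}^\U & \stackrel{\text{(a)}}{\geq} \tfrac{1}{3}OPT(\text{\ref{eq:relaxedAllocation}}, \U) \\
& \stackrel{\text{(b)}}{=} \tfrac{1}{3}\pi_{G}^\U \\
& \stackrel{\text{(c)}}{\geq} \tfrac{1}{3}(1-1/e)\,OPT(\text{\ref{eq:mainProblem}}),
\end{aligned}
\end{equation}
where (a) is Lemma~\ref{lemma:GCAapproximation1}, (b) holds because $\pi_G^\U$ is by definition the optimal value of problem \eqref{eq:relaxedAllocation} for the set $\U$, and (c) follows from the first step. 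This yields the claimed approximation ratio of $\frac{1}{3}(1-1/e)$.

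The key observation is that none of the intermediate (in)equalities in Theorem~\ref{theorem:mainResult} actually used the special structure of MCA beyond its $1/2$ guarantee; they only used the submodularity/monotonicity of $R_3$ (Lemma~\ref{lemma:submodularity}), the placement guarantee (Lemma~\ref{lemma:placement}), and the relaxation relationships between \eqref{eq:mainProblem}, \eqref{eq:relaxedProblem}, \eqref{eq:relaxedPlacement}, and \eqref{eq:relaxedAllocation}. Consequently there is essentially no new obstacle: the proof is a direct substitution. The only point warranting care is ensuring that the GCA output is indeed a feasible allocation for \eqref{eq:mainProblem} (i.e., every counted flow is \emph{fully} processed and capacity constraints \eqref{eq:nodecapacity} hold), which is exactly what is established in the proof of Lemma~\ref{lemma:GCAapproximation1}, so it may be worth a one-line remark that $\pi_\text{GCA}^\U$ lower-bounds the objective $J_1$ achieved by RP-GCA on the original problem. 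With that, the result of Theorem~\ref{theorem:mainResult2} follows.
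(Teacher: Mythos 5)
Your proposal is correct and follows essentially the same route as the paper: the paper's own proof of Theorem~\ref{theorem:mainResult2} simply repeats the argument of Theorem~\ref{theorem:mainResult} with Lemma~\ref{lemma:GCAapproximation1} substituted for Lemma~\ref{lemma:MCAapproximation}, i.e., replacing $1/2$ by $1/3$ in Eq.~\eqref{eq:mainTheorem}, exactly as you do. Your added remark about feasibility of the GCA output for \eqref{eq:mainProblem} is a sensible (if implicit in the paper) clarification.
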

        \begin{proof}
        The proof follows the same argument as in the proof of Theorem \ref{theorem:mainResult}. Since the GCA algorithm achieves an approximation ratio of $1/3$ for the capacity allocation (Lemma \ref{lemma:GCAapproximation1}), the proof proceeds exactly the same except that we need to replace $1/2$ with $1/3$ in Eq. \eqref{eq:mainTheorem}.
    \end{proof}

            \begin{table}[t]
        \centering
        \begin{tabular}{| c | c |p{1.6cm} | c |}
            \hline
            Setting & Algorithm & Approximation & Complexity \\
            \hline
            \multirow{3}{5.4em}{Homogeneous \\ VNF costs} & RP-MCA & $\frac{1}{2}(1-1/e)$ & $O(kV)^{\dagger}\, + \, O(F^2V^2)$ \\ [1ex]  \cline{2-4}  
            
            & RP-GCA & $\frac{1}{3}(1-1/e)$ & $O(kV)^\dagger\, + \, O(FV)$  \\ [1ex] 
            
            & & $\frac{2}{5}(1-1/e)^*$ & \\ [1ex]  \hline
            \multirow{3}{5.4em}{Heterogeneous \\ VNF costs}     & RP-MCA & $\frac{1}{2}(1-1/e)$ & $O(V^5)^{\dagger}\, + \, O(F^2V^2)$ \\ [1ex]  \cline{2-4} 
            
            & RP-GCA & $\frac{1}{3}(1-1/e)$ & $O(V^5)^{\dagger}\, + \, O(FV)$ \\ [1ex]  
            &  & $\frac{2}{5}(1-1/e)^*$ &  \\ [1ex]  \hline
            
        \end{tabular}
        \caption{Approximation ratios and time complexities of the proposed algorithms.  \footnotesize{$^*$These are the approximation results for the GCA algorithm when Assumption \ref{assump:equal_capacity} holds.} \footnotesize{$^\dagger$This is the number of function evaluations used in the submodular optimization.}}
        \label{table:complexity}
    \end{table}
 
    Table \ref{table:complexity} summarizes the complexity of our proposed algorithms. In literature on submodular optimization, the complexity of algorithms for submodular functions is often measured through the number of function evaluations. The function evaluation itself is usually assumed to be conducted by an oracle,  and thus its complexity is not taken into account \cite{li2018towards}. We followed this approach here. Note that we can utilize other alternative algorithms to the EG algorithm to improve the running time  substantially but with a slightly worse approximation ratio \cite{khuller1999budgeted, li2018towards}. 
    We provide more discussions about the complexity analysis in Appendix \ref{appendix:complexity}.

    \section{Numerical Results}
             \begin{figure}[t]
        \centering
        \begin{subfigure}{0.48\linewidth}
            \centering
            \includegraphics[width=0.99\linewidth]{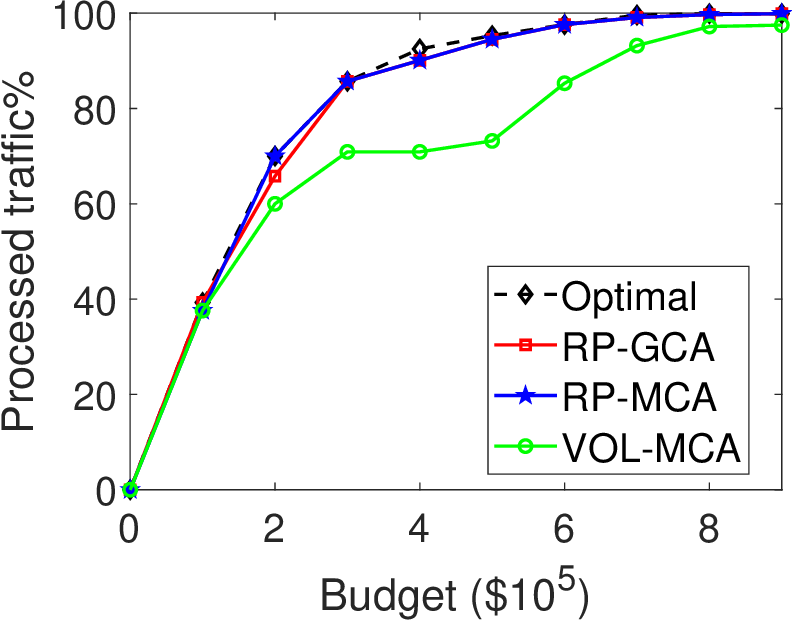}
            \caption{VNF-node capacity = 1 Gbps}
            \label{fig:Abilene_TotalFlow_diff_budget}
        \end{subfigure}
        \begin{subfigure}{0.48\linewidth}
            \centering
            \includegraphics[width=0.99\linewidth]{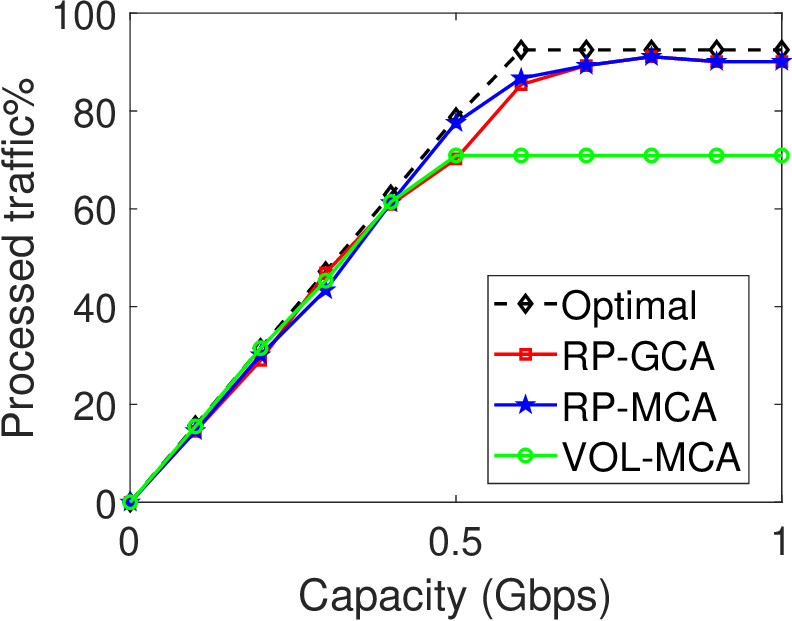}
            \caption{Budget = $\$400K$}
            \label{fig:Abilene_TotalFlow_diff_capacities}
        \end{subfigure}
         
        \caption{Evaluation on the Abilene dataset}
        \label{fig:abilene}
    \end{figure}

    \label{sec:evaluation}
    In order to evaluate the performance of the proposed algorithms, we consider real-world network topologies and traffic statistics. We compare the proposed algorithms with the following baselines: 1) optimal solution: we can solve problem \eqref{eq:mainProblem} optimally using  Gurobi~\cite{gurobi}, an ILP solver, for the presented instances. 
    Recall that the VPCA problem is NP-hard in general (Theorem~\ref{theorem:nphardness}).
    Although we are able to obtain the optimal solution for the problem instances we consider here, it may take a prohibitively large amount of time to obtain the optimal solution for some other problem instances. 2) VOL-MCA \cite{hong2016incremental}: this scheme selects the nodes with the highest traffic volume that traverses them. For the selected nodes, we allocate their capacity using the proposed MCA algorithm. We evaluate the performance of each algorithm based on the percentage of the processed traffic, which is defined as the ratio between the total volume of the traffic fully processed by the VNF-nodes and the total traffic volume. We run the simulations on a PC with Intel Core i7-7700 processor and 32GB memory.
    
    \subsection{Evaluation Datasets and Simulation Parameters}
  
    \subsubsection{Abilene Dataset}
    We consider the Abilene dataset \cite{abilene} collected from an educational backbone network in North America. The network consists of 12 nodes and 144 flows. Each flow rate was recorded every five minutes for 6 months. The OSPF weights were recorded, which allows us to compute the shortest path of each flow based on these weights. In our experiments, we set the flow rate to the recorded value of the first day at 8:00 pm. 
    
    \subsubsection{SNDlib Datasets}
    We also consider two datasets from SNDlib \cite{orlowski2010sndlib}: Cost266 and Ta2. Cost266 has 37 nodes and 1332 flows; Ta2 has 65 nodes and 1869 flows. For Cost266, the routing cost of each link is available, which can be used to compute the shortest path of each flow. For Ta2, we use the hop-count-based shortest path. We set the capacity of each VNF-node to $1$ Gbps as the default value. 
    
     \subsection{Evaluation Results}
      \begin{figure*}[ht]
     \begin{minipage}[l]{.8\columnwidth}
         \centering
         \includegraphics[width=0.80\linewidth]{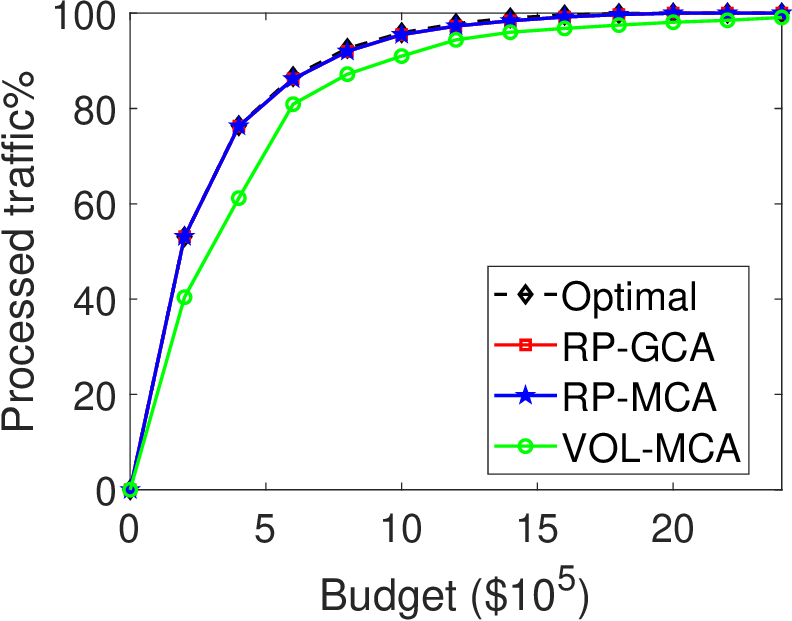}
         \caption{Evaluation on the Cost266 datasets}\label{fig:cost266}
     \end{minipage}
     \hfill{}
     \begin{minipage}[r]{1.2\columnwidth}
          \centering
        \begin{subfigure}{0.48\linewidth}
            \centering
            \includegraphics[width=0.99\linewidth]{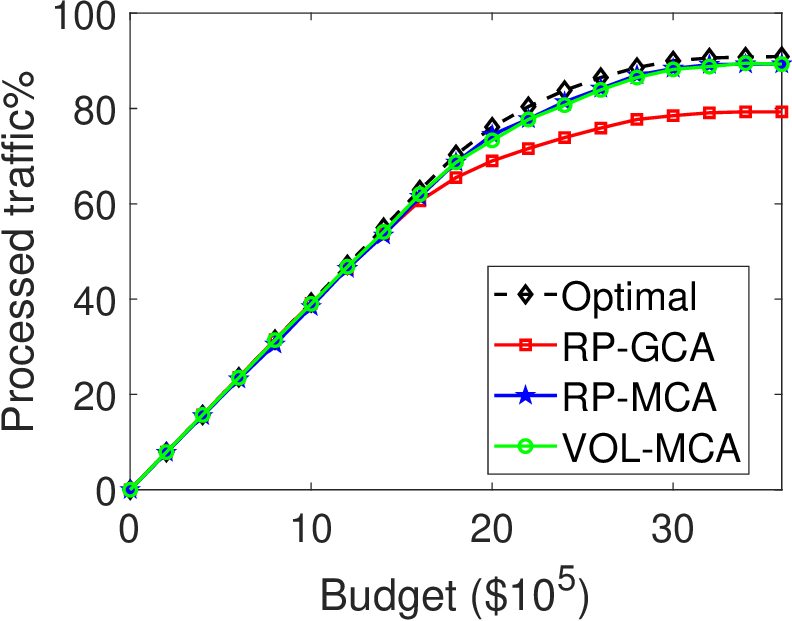}
            \caption{Capacity = 1 Gbps}
            \label{fig:ta2_TotalFlow_diff_budget_1gbps}
        \end{subfigure}
        \begin{subfigure}{0.48\linewidth}
            \centering
            \includegraphics[width=0.99\linewidth]{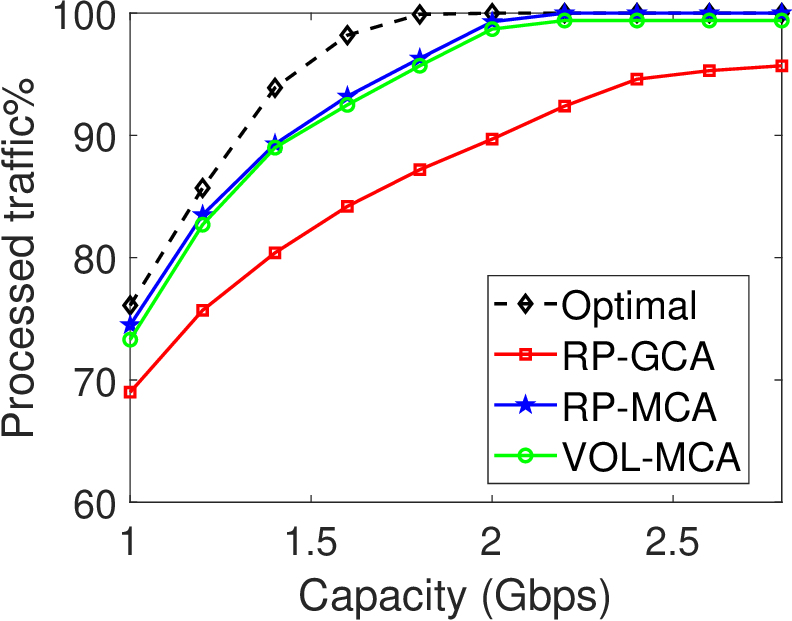}
            \caption{Budget = $\$2M$}
            \label{fig:ta2_TotalFlow_diff_capacity_budget_16_starts_with_1gbps}
        \end{subfigure}
        \caption{Evaluation on the Ta2 datasets}
        \label{fig:ta2}
     \end{minipage}
\end{figure*}

   We start with the Abilene dataset, where we set the cost of a VNF-node to $\$100K$ and vary the processing capacity between $100$ Mbps and $1$ Gbps. Also, we vary the total budget between $\$100K$ and $\$900K$, with an increment of $\$100K$. Fig.~\ref{fig:abilene}(\subref{fig:Abilene_TotalFlow_diff_budget}) shows the percentage of processed traffic for the considered algorithms. We can see that both the RP-MCA and RP-GCA algorithms perform almost the same as the optimal solution and have up to 20$\%$ improvement over the VOL-MCA algorithm. Note that as the budget increases, the total processed traffic increases under all the considered algorithms. However, while the proposed algorithms need a budget of  $\$500K$ to process around $95\%$ of the flows, the VOL-MCA algorithm actually requires around $\$800K$ to process the same amount. While the total amount of processed traffic can be improved by deploying more VNF-nodes, another option is to consider provisioning more capacities at each VNF-node. In other words, we can deploy fewer but more powerful VNF-nodes, which can improve the percentage of the total processed traffic as well. We can make this observation in Fig.~\ref{fig:abilene}(\subref{fig:Abilene_TotalFlow_diff_capacities}), where we vary the capacity of each VNF-node between $100$ Mbps and $1$ Gbps. The total processed traffic reaches around $92\%$ and saturates at that point. This happens because the path of some flows may not contain any of the VNF-nodes placed by the algorithms we consider.

    Furthermore, we evaluate the proposed algorithms on datasets with a larger number of nodes and flows. We also set the cost of each VNF-node to $\$100K$ and the processing capacity to $1$ Gbps. We start with the Cost266 dataset, which consists of 37 nodes and 1332 flows. Since we have a large number of flows, we select flows uniformly at random. Specifically, we select 1000 flows and repeat each experiment 10 times to gain more statistical significance.  We also vary the budget between $\$100K$ and $\$2.4M$ with an increment of $\$200K$. Fig. \ref{fig:cost266} shows that the proposed algorithms still exhibit superior performance compared to the VOL-MCA algorithm and match the performance of the optimal solution. An interesting observation is that for this specific instance, we only need to deploy six VNF-nodes to process around $90\%$ of the traffic. This suggests that we may gradually transition to the NFV paradigm with a lower total cost and consolidate the NFV functionalities at a small portion of nodes. The later point of having a small number of VNF-nodes is advantageous in terms of reducing the management burden, but that could increase the risk of having a single point of failure. In our future work, we will investigate how to adjust the proposed algorithms to ensure resilience against hardware failures at the VNF-nodes. Note that VNF-node capacities are not a bottleneck because flow rates of the Cost266 dataset are relatively small. Therefore, increasing the capacity of VNF-nodes shows no impact on the total processed traffic.
    
    Finally, we consider a denser topology Ta2, which consists of 65 nodes and 1869 flows. Similarly, we select 1500 flows uniformly at random and repeat each experiment 10 times. We present the results in Fig. \ref{fig:ta2}(\subref{fig:ta2_TotalFlow_diff_budget_1gbps}). The simulation results show that both RP-MCA and VOL-MCA have a similar performance, which is also close to the optimal. An interesting observation is that all the considered algorithms tend to select the same set of nodes, which results in a similar performance of the RP-MCA and VOL-MCA as both of them use the same capacity allocation algorithm. In addition, the GCA algorithm performs worse than the MCA algorithm for this particular instance, which results in a worst overall performance under the RP-GCA algorithm. The maximum total processed traffic reaches around $92\%$ and saturates at that point. The reason is that while all the nodes on the paths of the flows have been selected, they do not have sufficient processing capacity to process all the flows. In this case, selecting more other nodes (due to an increased budget) does not help.
    This motivates us to study the impact of increasing the VNF-node capacity in Fig.~\ref{fig:ta2}(\subref{fig:ta2_TotalFlow_diff_capacity_budget_16_starts_with_1gbps}). We consider a fixed budget of $\$2M$ and vary the capacity of each node between $1$ Gbps and $2.8$ Gbps with an increment of $200$ Mbps. We can see that as the capacity increases, the total processed traffic increases and can reach around  $100\%$ when the capacity is more than $2$ Gbps.

    \section{Conclusion and Future Work}
    \label{sec:conclusion}
    In this paper, we studied the problem of deploying VNF-nodes and allocating their capacities. We showed how to overcome the non-submodularity of the problem by introducing a novel relaxation method. By utilizing a decomposition of the problem and a novel network flow reformulation, we were able to prove submodularity of the relaxed placement subproblem and develop efficient algorithms with constant approximation ratios for the original problem. Through extensive evaluations based on trace-driven simulations, we showed that the proposed algorithms have a performance close to the optimal solution and better than a heuristic algorithm. 
    
    Our work also raises several interesting questions that are worth investigating as future work. 
    First, we would like to consider the problem of joint placement, routing, and capacity allocation and investigate the impact of dynamic routing on the objective function. Despite the expected additional challenges due to routing, especially with integral resources and service function chaining as shown in \cite{Feng2017}, we plan to further investigate whether our proposed framework can be extended to more general settings with joint placement and routing. 
    In this regard, recent work in~\cite{poularakis2020approximation} provides multi-criteria approximation algorithms for minimum cost joint placement and integral routing of service function chains under storage, computation, and communication capacity constraints. Understanding if our framework can be used to provide efficient approximations in similar settings with budget constraints is of interest for future work.
    %
    Second, while we have assumed that each VNF-node has a fixed capacity in this paper,
    it is worth studying the optimization of the VNF-node capacity.
    A straightforward extension is to consider the setting where each VNF-node hosts multiple servers, each of which has a different cost and a different processing capacity. Our proposed framework can be modified to accommodate the extension: we consider multiple replications of each node, called virtual nodes, each of which corresponds to a server with its own cost and processing capacity. The path of each flow now needs to be updated to include the corresponding virtual nodes as well. We further elaborate on the proposed extension and present some additional simulation results in our online technical report~\cite{sallam2019joint}.

    \bibliographystyle{IEEEtran}
    \bibliography{references}
    
    \begin{appendices}
        \section{Proof of Theorem \ref{theorem:nphardness}}
        \label{proof:nphardness}
        \begin{proof}
            We start by proving that the allocation subproblem~\eqref{eq:allocation} is NP-hard.
            The proof is by a reduction from a special case of the single knapsack (SK) problem, where for each item the profit and the weight are identical. In the SK problem, we have a knapsack $k$ and a set of items $\mathcal{I}$. The knapsack has a capacity $W$, and each item $i \in \mathcal{I}$ has a weight of $w_i$, which is the same as the profit. The objective is to find a subset of items $\mathcal{I^\prime} \subseteq \mathcal{I}$ that has the maximum total profit and can be packed in the knapsack without exceeding its capacity. Given an arbitrary instance $\mathcal{A}=(k, \mathcal{I})$ of the SK problem, we construct an instance $\mathcal{D}=(\V, \F)$ of the allocation problem \eqref{eq:allocation}. The set $\V$ has only one node $v_1$ with a capacity that is equal to the capacity of the knapsack $k$. Each flow $f \in \mathcal{F}$ corresponds to an item in $i \in \mathcal{I}$. A flow $f$ has a traffic rate $\lambda_f$ that is equal to the corresponding item weight $w_i$. Node $v_1$ is the only VNF-node, and all the flows traverse node $v_1$. If we can solve the instance $\mathcal{D}$  of problem \eqref{eq:allocation}, the subset of flows assigned to node $v_1$, which has the maximum total traffic rate, can be mapped to the corresponding items and solve the instance $\mathcal{A}$ of SK problem. Similarly, a solution for instance $\mathcal{A}$ of the SK problem can be mapped to a solution for instance $\mathcal{D}$ by simply mapping the selected items $\mathcal{I^\prime}$ to the corresponding flows that solve the instance $\mathcal{D}$ of problem \eqref{eq:allocation}.
        
        Next, we prove the NP-hardness of the placement subproblem \eqref{eq:placement}.
            The proof is by a reduction from the budgeted maximum coverage (BMC) problem. In the BMC problem, we have a set of points $\mathcal{M}$ and a set of candidate locations $\mathcal{S}$. Each point $m \in \mathcal{M}$ has a weight of $w_m$. Each location $s \in \mathcal{S}$ has a cost of $b_s$ and covers a subset of points $\mathcal{M}_s \subseteq \mathcal{M}$. The objective is to select a subset of locations $\mathcal{S}^{\prime} \subseteq \mathcal{S}$ such that the total weight of the points covered by at least one location in $\mathcal{S}^{\prime}$ is maximized while the total cost of the selected locations does not exceed a given budget $B$. Given an arbitrary instance $\mathcal{A}=(\mathcal{M}, \mathcal{S}, B)$ of BMC, we will construct an instance $\mathcal{D}=(\F, \V, B)$ of problem \eqref{eq:placement} as follows. Each flow $f \in \mathcal{F}$ corresponds to a point $m \in \mathcal{M}$; the rate of a flow is equal to the weight of the corresponding point. Each node $v \in \mathcal{V}$ corresponds to a location $s \in \mathcal{S}$; the cost of a node is equal to that of the corresponding location. The path of a flow consists of the nodes corresponding to the locations that cover the point corresponding to this flow. The deployment budget of the instance $\mathcal{D}$ is equal to the budget of the instance $\mathcal{A}$. All the nodes have an infinite capacity.  We will show that a solution for the instance $\mathcal{D}$ exists if and only if a solution for the instance $\mathcal{A}$ exists. If we can solve the instance $\mathcal{A}$ of BMC, the subset of locations $\mathcal{S^\prime} \subseteq \mathcal{S}$ that solves $\mathcal{A}$ of BMC can be mapped to the corresponding nodes in $\V$ to become VNF-nodes and solves the instance $\mathcal{D}$ of problem \eqref{eq:allocation}. Similarly, if we solve the instance $\mathcal{D}$, then the obtained set of VNF-nodes can be mapped to the corresponding subset of locations that solve the instance $\mathcal{A}$ of BMC.
        \end{proof}
       
        \section{Proof of Lemma \ref{lemma:Q2MaxFlow}}
        \label{proof:Q2MaxFlow}
        \begin{proof}
        Recall that ${R_3}(\U)=\max_{\boldsymbol{\lambda} \in \Lambda^{\U}} {R_2^\U}(\boldsymbol{\lambda})$, where $\Lambda^{\U}$ is the set of assignments satisfying the capacity constraint \eqref{eq:nodecapacity} and the flow rate constraint \eqref{eq:traffic2}.
        It suffices to show the following: 
        \begin{enumerate}[(A)]
        \item for any assignment $\boldsymbol{\lambda} \in \Lambda^{\U}$, one can construct an $s$-$\V$ flow $\varphi \in \overline{\F}$ such that ${R_2^\U}(\boldsymbol{\lambda}) = \Phi(\N, \U) - \Phi(\U, \N)$; 
        
        \item for any $s$-$\V$ flow $\varphi \in \overline{\F}$, one can construct an assignment $\boldsymbol{\lambda} \in \Lambda^{\U}$ such that ${R_2^\U}(\boldsymbol{\lambda}) = \Phi(\N, \U) - \Phi(\U, \N)$.
        \end{enumerate}
        Note that Part (A) implies 
        $\max_{\boldsymbol{\lambda} \in \Lambda^{\U}} {R_2^\U}(\boldsymbol{\lambda}) \leq \max_{\varphi \in \overline{\F}} (\Phi(\N, \U) - \Phi(\U, \N))$
        and Part (B) implies 
        $\max_{\boldsymbol{\lambda} \in \Lambda^{\U}} {R_2^\U}(\boldsymbol{\lambda}) \geq \max_{\varphi \in \overline{\F}} (\Phi(\N, \U) - \Phi(\U, \N))$,
        which lead to Eq.~\eqref{eq:equivalence}.
    
        We first show Part (A). For any assignment $\boldsymbol{\lambda} \in \Lambda^{\U}$, we construct a function $\varphi \in \overline{\F}$ in the following manner: 
        \begin{enumerate}[(i)] 
        \item set $\varphi(s, f) = \sum_{v \in \V_f} \lambda_f^v$ for each edge $(s, f) \in \link_1$;
        \item set $\varphi(f, v^\prime) = \lambda_f^v$ for each  edge $(f, v^\prime) \in \link_2$;
        \item set $\varphi(v^\prime, v) = \sum_{f \in \F} \lambda_f^v$ for each edge$(v^\prime, v) \in \link_3$.
        \end{enumerate}
        Note that $\lambda_f^v =0$ for all $v \notin \U$. It is easy to verify that constraints \eqref{eq:nodecapacity} and \eqref{eq:traffic2} imply that the constructed function $\varphi$ is an $s$-$\V$ flow. Further, the following is also satisfied:
        \begin{equation}  
        \label{eq:R2ToNetFlow}      
        \begin{aligned}
        {R_2^\U}(\boldsymbol{\lambda}) & \stackrel{\text{(a)}}{=} \sum_{f \in \F} \sum_{v \in \V_f \cap \U} \lambda_f^v \\
        & \stackrel{\text{(b)}}{=} \sum_{f \in \F} \sum_{v \in \V_f} \lambda_f^v \\
        & \stackrel{\text{(c)}}{=} \sum_{f \in \F} \varphi(s, f) \\        
        & \stackrel{\text{(d)}}{=} \sum_{f \in \N_\F} \varphi(s, f) \\
        & \stackrel{\text{(e)}}{=}  \Phi(\{s\}, \N) - \Phi(\N, \{s\}) \\  
        & \stackrel{\text{(f)}}{=} \Phi(\N, \N_\V) - \Phi(\N_\V, \N) \\
        & \stackrel{\text{(g)}}{=} \Phi(\N, \U) - \Phi(\U, \N),
        \end{aligned}
        \end{equation}
        where (a) is from the definition of ${R_2^\U}(\boldsymbol{\lambda})$, (b) is from $\lambda_f^v = 0$ for all $v \notin \U$, (c) is from (i), (d) is from the one-to-one mapping between $\F$ and $\N_\F$, (e) is from the definition of $\Phi(\{s\}, \N)$ and $\Phi(\N, \{s\})=0$, (f) holds because the net-flow at the source $s$ plus the net-flow at the sinks is equal to zero, and (g) holds because no flow goes to the sinks in $\N_\V \setminus \U$.
        
        We now show Part (B). For any $s$-$\V$ flow $\varphi \in \overline{\F}$, we first obtain another $s$-$\V$ flow $\varphi^{\prime} \in \overline{\F}$ by deleting all the flow going to the sinks in $\N_\V \setminus \U$. Note that this procedure does not change the net-flow at the sinks in $\U$, i.e., $\Phi^{\prime}(\N, \U) - \Phi^{\prime}(\U, \N) = \Phi(\N, \U) - \Phi(\U, \N)$, where $\Phi^{\prime}$ corresponds to $\varphi^{\prime}$. Then, we construct an assignment $\boldsymbol{\lambda} \in \Lambda^{\U}$ by simply setting $\lambda_f^v = \varphi^{\prime}(f, v^\prime)$ for every $f \in \F$ and every $v \in \U$.  
        It is easy to verify that the definition of the $s$-$\V$ flow implies that constraints \eqref{eq:nodecapacity} and \eqref{eq:traffic2} are satisfied for assignment $\boldsymbol{\lambda}$. Finally, following the same steps in Eq.~\eqref{eq:R2ToNetFlow}, we can show ${R_2^\U}(\boldsymbol{\lambda}) = \Phi^{\prime}(\N, \U) - \Phi^{\prime}(\U, \N)$, and thus, ${R_2^\U}(\boldsymbol{\lambda}) = \Phi(\N, \U) - \Phi(\U, \N)$.
        
        Combining Parts (A) and (B) completes the proof. 
        \end{proof}

        \section{Proof of Lemma \ref{lemma:MCAapproximation}}
        \label{proof:MCAapproximation1}
        \begin{proof}
            In Phase I, the algorithm repeatedly alternates between Step 1 and Step 2. Step 1 will be repeated $m$ times, where $m \leq |V|$. In each repetition of Step 1, the capacity allocation of a given VNF-node $v$ is modified, and the value of $OPT(\text{\ref{eq:relaxedAllocation}}, \U)$ is reduced by at most $\frac{1}{2}r_v$. In Step 2, the perturbation does not change the value of the total assigned traffic. Therefore, the following is satisfied after Phase I:
            \begin{equation}
            \begin{aligned}
            \pi_{\text{MCA}}^\U & \geq OPT(\text{\ref{eq:relaxedAllocation}}, \U) - \frac{1}{2} \sum_{i=1}^{m} r_{v_i} \\
            & \geq  OPT(\text{\ref{eq:relaxedAllocation}}, \U) - \frac{1}{2}OPT(\text{\ref{eq:relaxedAllocation}}, \U) \\
            & = \frac{1}{2}OPT(\text{\ref{eq:relaxedAllocation}}, \U).
            \end{aligned}
            \end{equation}
            That is, the total traffic assigned to VNF-nodes $\U$ after Phase I is $\pi_{\text{MCA}}^\U \geq \frac{1}{2}OPT(\text{\ref{eq:relaxedAllocation}}, \U)$.    In Phase II, the total assigned traffic will either increase or remain the same in the worst case. Therefore, the result of the Lemma follows.
        \end{proof}
        \section{Proof of Lemma \ref{lemma:GCAapproximation1}}
        \label{proof:GCAapproximation1}
        \begin{proof}
            
            Let $\F^\prime \subseteq \F_{\U}$ denote the set of unassigned flows after the end of algorithm \ref{alg:GCA} and $\U^\prime = \cup_{f \in \F^\prime_\U} \U_f$ be the set of candidate VNF-nodes for the unassigned flows $\F^\prime$. We remind the reader that we use $c_u$ (resp. $c_{\U_i}$) to denote the capacity of VNF-node $u$ (resp. VNF-nodes $\U_i$). Similarly, we use $r_u$ (resp. $r_{\U_i}$) to denote the total traffic assigned to VNF-node $u$ (resp. VNF-nodes $\U_i$). In Lemma \ref{lemma:halfcapacity},  we start by showing  that the total assigned traffic to any VNF-node in $\U^\prime$ is at least half of its total capacity, i.e., $r_u \geq \frac{1}{2} c_u$ for any VNF-node $u$ in $\U^\prime$, and use that to prove the $1/3$ approximation ratio of Lemma~\ref{lemma:GCAapproximation1}. 
            \begin{lemma}
                \label{lemma:halfcapacity}
                After Phase I of the GCA algorithm, it holds that for any VNF-node $u$ in $\U^\prime$, $r_u \geq \frac{1}{2} c_u$.
            \end{lemma}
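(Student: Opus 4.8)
The plan is to argue by contradiction, exploiting the fact that flows are processed in nonincreasing order of rate in Phase~I of GCA. Suppose, for the sake of contradiction, that there is a VNF-node $u \in \U^\prime$ with $r_u < \tfrac{1}{2} c_u$ after Phase~I. Since $u \in \U^\prime$, there is at least one unassigned flow $f \in \F^\prime$ with $u \in \U_f$, i.e., $u$ lies on the path of $f$ and $u \in \U$. The key observation is that when flow $f$ was considered in the sorted order during Phase~I, it was not assigned to any VNF-node in $\U_f$ (in particular not to $u$); by the rule in line~3 of Algorithm~\ref{alg:GCA}, this means that at that moment every VNF-node in $\U_f$, and $u$ in particular, had remaining capacity strictly less than $\lambda_f$, i.e., $c^\prime_u < \lambda_f$ at that time.

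First I would translate this into a statement about $r_u$. At the time flow $f$ was considered, the traffic already assigned to $u$ was $c_u - c^\prime_u > c_u - \lambda_f$. Since Phase~I never removes traffic from a node, the final value $r_u$ is at least this amount, so $r_u > c_u - \lambda_f$. Combining with the assumed bound $r_u < \tfrac{1}{2} c_u$ gives $c_u - \lambda_f < \tfrac{1}{2} c_u$, hence $\lambda_f > \tfrac{1}{2} c_u$, i.e., flow $f$ is ``large'' relative to $u$'s capacity.

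Next I would use the ordering to show a contradiction. Because flows are processed in nonincreasing rate order, all the flows that were actually assigned to $u$ in Phase~I were considered no later than $f$ only if their rate is at least $\lambda_f$; more carefully, any flow assigned to $u$ before $f$ was considered has rate $\ge \lambda_f > \tfrac12 c_u$, and since two such flows would already overflow $c_u$, at most one flow was assigned to $u$ before $f$ was reached. But then at the time $f$ was considered, the remaining capacity of $u$ was either $c_u$ (if no flow had been assigned) or $c_u - \lambda_g$ for the single assigned flow $g$ with $\lambda_g \ge \lambda_f$. In the first case $c^\prime_u = c_u \ge \lambda_f$ by the capacity assumption ($\lambda_f \le \min_v c_v \le c_u$), so $f$ would have been assigned to $u$ — contradiction. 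In the second case one must rule out $c_u - \lambda_g \ge \lambda_f$, which would again force $f$ onto $u$; and if $c_u - \lambda_g < \lambda_f$, then together with $\lambda_g \ge \lambda_f > \tfrac12 c_u$ we get $r_u \ge \lambda_g > \tfrac12 c_u$, contradicting $r_u < \tfrac12 c_u$.

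The main obstacle I anticipate is the bookkeeping in the case where some flow was assigned to $u$ before $f$: one has to carefully use both the nonincreasing-rate ordering (to control the size of the flow assigned to $u$) and the single-node capacity assumption $\lambda_f \le c_u$ (to guarantee $f$ fits when $u$ is empty), and then chase the inequalities to close both branches. The rest is a short contradiction argument; this lemma is then used as the stepping stone to the $1/3$ bound by noting that every node serving an unassigned flow is at least half full, so the total assigned traffic plus the capacity ``wasted'' on the unassigned flows is controlled by twice $\pi_\text{GCA}^\U$.
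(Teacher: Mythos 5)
Your proof is correct and takes essentially the same route as the paper's: a contradiction argument that first derives $\lambda_f > \frac{1}{2}c_u$ from the failure to assign $f$, and then uses the nonincreasing-rate ordering together with the assumption $\lambda_f \le c_u$ to conclude that $f$ would in fact have been assigned. Your explicit case analysis on whether a flow was already assigned to $u$ when $f$ was considered merely makes rigorous the step the paper states informally ("flow $f$ would be assigned instead of some of the already assigned flows").
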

            
            \begin{proof}
                We will prove this by contradiction. Let say, for the sake of contradiction, that there is an unassigned flow $f$ for which there is a VNF-node $u$ in $\U_f$ such that $r_u < \frac{1}{2} c_u$. This means the flows assigned to VNF-node $u$ have traffic rate less than $\frac{1}{2}c_u$. Furthermore, since flow $f$ was not assigned to VNF-node $u$, its traffic rate has to be greater than half the capacity of node $u$, i.e., $\lambda_{f} > \frac{1}{2}c_u$.  However, that contradicts our algorithm where flows with the highest traffic rate are considered first, and thus flow $f$ would be assigned to VNF-node $u$ instead of some of the already assigned flows.
            \end{proof}
            
            Next, the proof of Lemma \ref{lemma:GCAapproximation1} proceeds as follows. The maximum traffic that can be assigned by any algorithm to VNF-nodes $\U$ has the following upper bound:
            \begin{equation}
            \begin{aligned}
            OPT(\text{\ref{eq:relaxedAllocation}}, \U) &\stackrel{\text{(a)}}{\leq}  \pi_\text{GCA}^\U + c_{\U^\prime}\\
            & \stackrel{\text{(b)}}{\leq} \pi_\text{GCA}^\U + 2r_{\U^\prime}\\
            & \stackrel{\text{(c)}}{\leq} \pi_\text{GCA}^\U + 2\pi_\text{GCA}^\U\\
            & = 3\pi_\text{GCA}^\U,
            \end{aligned}
            \end{equation} 
            where (a) holds because  the maximum traffic that can be assigned by an optimal solution is at most the sum of the traffic of the assigned flows, which is $\pi_\text{GCA}^\U$, and the maximum possible traffic that can be assigned for the unassigned flows, which is $c_{\U^\prime}$; (b) holds from Lemma \ref{lemma:halfcapacity} because the total traffic assigned to VNF-nodes in $\U^\prime$ by Algorithm \ref{alg:GCA} is at least half of their total capacity, i.e., $r_{\U^\prime} \geq \frac{1}{2} c_{\U^\prime}$; (c) holds because $r_{\U^\prime}$ is upper bounded by $\pi_\text{GCA}^\U$.
            This completes the proof.
        \end{proof}
        \section{Proof of Lemma \ref{lemma:GCAapproximation2}}
        \label{proof:GCAapproximation2}
        \begin{proof}
            We first present Lemma \ref{lemma:twothirdcapacity}, but before that we repeat Assumption \ref{assump:equal_capacity} here to ease the proof navigation.
            \addtocounter{assumption}{-1}
            \begin{assumption}
        Assume that all the VNF-nodes in $\U$ have the same capacity and that every flow $f$ in $\F_\U$  traverses at least two VNF-nodes in $\U$, i.e., $|\V_f \cap \U| \geq 2$. 
    \end{assumption}
            \begin{lemma}
                \label{lemma:twothirdcapacity}
                If Assumption \ref{assump:equal_capacity} holds, then for any unassigned flow $f \in \F^\prime$, it holds that for any pair of VNF-nodes $(u, v)$ in $\U_f$, $r_{\{u, v\}} \geq \frac{2}{3} c_{\{u, v\}}$.
            \end{lemma}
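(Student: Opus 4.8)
The plan is to sharpen the argument of Lemma~\ref{lemma:halfcapacity}, exploiting the two extra hypotheses in Assumption~\ref{assump:equal_capacity}: a common node capacity (call it $c$, so that $c_{\{u,v\}} = 2c$ for any pair) and the requirement $|\U_f| \ge 2$. Fix an unassigned flow $f \in \F^\prime$ and an arbitrary pair $u, v \in \U_f$, and set $k = |\U_f| \ge 2$. The goal is to show $r_{\{u,v\}} = r_u + r_v \ge \tfrac{4}{3}c$, and I will derive two lower bounds on $r_u + r_v$ whose maximum is always at least $\tfrac{4}{3}c$.

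\emph{First bound, from the fact that $f$ survives Phase~II.} Since $f$ is never assigned, in particular it is not assigned when processed in Phase~II of the GCA algorithm, so at that moment the total remaining capacity of $\U_f$ is strictly below $\lambda_f$; as remaining capacities only shrink over the run of the algorithm, the same inequality holds at termination, i.e.\ $\sum_{w \in \U_f}(c - r_w) < \lambda_f$, equivalently $\sum_{w \in \U_f} r_w > kc - \lambda_f$. Each $r_w$ is at most $c_w = c$, so the $k-2$ nodes in $\U_f \setminus \{u,v\}$ contribute at most $(k-2)c$, and therefore $r_u + r_v > (kc - \lambda_f) - (k-2)c = 2c - \lambda_f$. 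This is the step that is genuinely new relative to Lemma~\ref{lemma:halfcapacity}: the Phase~II non-assignment condition constrains the \emph{sum} of the loads over all of $\U_f$, and uniform capacities let us transfer this bound to an arbitrary single pair.

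\emph{Second bound, from the sorted order in Phase~I.} Because $f$ is also unassigned after Phase~I, when $f$ was processed there every node $w \in \U_f$ must have had remaining capacity below $\lambda_f$, so its load exceeded $c - \lambda_f$, which is $\ge 0$ by the standing assumption that no flow rate exceeds the smallest node capacity; hence the load on $w$ was strictly positive, so $w$ already carried at least one flow at that point, and every flow processed before $f$ has rate at least $\lambda_f$ by the nonincreasing ordering. Thus $r_u \ge \lambda_f$ and $r_v \ge \lambda_f$, giving $r_u + r_v \ge 2\lambda_f$.

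\emph{Conclusion.} Combining the two estimates, $r_u + r_v \ge \max\{\,2c - \lambda_f,\ 2\lambda_f\,\}$, and the right-hand side (a convex function of $\lambda_f$) is minimized at $\lambda_f = \tfrac{2}{3}c$ with value $\tfrac{4}{3}c = \tfrac{2}{3}c_{\{u,v\}}$, which is exactly the claim. I expect the only real obstacle to be bookkeeping discipline — consistently tracking whether a quantity refers to the state when $f$ is processed or to the terminal state (loads are nondecreasing and remaining capacities nonincreasing, so the terminal values are the convenient ones), and checking that $\lambda_f \le c$ indeed holds for the nodes of $\U_f$. Once the first bound is in place, the remainder is a one-line case split on whether $\lambda_f$ is below or above $\tfrac{2}{3}c$.
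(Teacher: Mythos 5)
Your proof is correct and rests on the same two ingredients as the paper's own argument: the Phase~II failure condition (which forces the remaining capacity of $\U_f$, hence of the pair $\{u,v\}$, below $\lambda_f$) and the nonincreasing processing order of Phase~I (which forces each of $u$ and $v$ to already carry load when $f$ is considered). The paper phrases this as a contradiction --- assuming $r_{\{u,v\}} < \tfrac{2}{3}c_{\{u,v\}}$ forces $\lambda_f > \tfrac{2}{3}c$ and then a violation of the sorted order --- whereas you combine the two explicit bounds $r_u + r_v > 2c - \lambda_f$ and $r_u + r_v \geq 2\lambda_f$ directly and minimize over $\lambda_f$; your write-up also spells out a step the paper leaves implicit, namely that each of $u$ and $v$ must already hold at least one flow of rate at least $\lambda_f$ (otherwise $f$ would have been assigned to it in Phase~I).
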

            \begin{proof}
                Since the capacity of all VNF-nodes is the same by Assumption \ref{assump:equal_capacity}, we will use the symbol $c$ to denote the capacity of any VNF-node. We prove this lemma by contradiction. Assume that there is an unassigned flow $f$ for which there is a pair of VNF-nodes $(u, v)$ in $\U_f$ such that $r_{\{u, v\}} < \frac{2}{3} c_{\{u, v\}}$. In this case, the rate of flow $f$ has to be greater than $\frac{2}{3}c$; otherwise, it would fit on the combined remaining capacities of VNF-nodes $u$ and $v$ and would be assigned in Phase II of the algorithm. However, this also means that the flows assigned to VNF-nodes $u$ and $v$ have a rate that is less than $\frac{2}{3}c$, which contradicts our algorithm where flows with larger traffic rate will be considered first, and if possible get assigned. 
            \end{proof}
            The rest of the proof of Lemma \ref{lemma:GCAapproximation2} follows the same argument as in the proof of Lemma \ref{lemma:GCAapproximation1} with the difference that we have $r_{\U^\prime} \geq \frac{2}{3} c_{\U^\prime}$ by Lemma \ref{lemma:twothirdcapacity}. The other parts of the proof are the same. That is, 
            \begin{equation}
            \begin{aligned}
            OPT(\text{\ref{eq:relaxedAllocation}}, \U) &\leq  \pi_\text{GCA}^\U + c_{\U^\prime}\\
            & \leq \pi_\text{GCA}^\U + \frac{3}{2}r_{\U^\prime}\\
            & \leq \pi_\text{GCA}^\U + \frac{3}{2}\pi_\text{GCA}^\U\\
            & \leq \frac{5}{2}\pi_\text{GCA}^\U.
            \end{aligned}
            \end{equation} 
            This completes the proof.
        \end{proof}
        
      \section{Complexity Analysis}
        \label{appendix:complexity}
        In this section, we analyze the complexity of the algorithms presented in Table \ref{table:complexity}.
        Each of the proposed algorithms has two sequential components: placement and capacity allocation. We analyze the complexity of each component in the following.
        
        \textbf{Complexity of the Placement Algorithms.} We have two placement algorithms: the SG algorithm and the EG algorithm. 
        For the SG algorithm, in each iteration, we select a new node, which requires $O(V)$ functions evaluations. Since we can select at most $k$ nodes (due to the budget limit), we need $O(kV)$ function evaluations in total. 
        For the EG algorithm, we have two phases: in Phase I, we need $O(V^2)$ function evaluations to evaluate all subsets of size one or two; in Phase II, we need $O(V^5)$ function evaluations to evaluate all subsets of size three and augment each subset in a greedy manner. Hence, the overall complexity of the EG algorithm is $O(V^5)$.
        
        \textbf{Complexity of the Capacity Allocation Algorithms.} We have two capacity allocation algorithms: the MCA algorithm and the GCA algorithm. For the MCA algorithm, we have two phases. In Phase I, we start by solving a maximum flow problem for a graph with $O(F+V)$ vertices, which has a complexity of $O(F^3)$ if solved using the Push-relabel algorithm \cite{goldberg2014efficient}. Then, the algorithm proceeds by repeatedly alternating between implementing Step 1 and Step 2. Step 1 is executed for at most $V$ times; each execution of Step 1 has a complexity of $O(F)$. Therefore, the overall complexity of Step 1 is $O(FV)$. For Step 2, we remove at least one edge at a time using the edge perturbation. Since the total number of edges is at most $O(FV)$, each execution of Step 2 has a complexity of $O(FV)$. Hence, the complexity of repeating Step 2 for at most $O(FV)$ edges is $O(F^2V^2)$. The overall complexity of Phase I is $O(F^2V^2)$ as Step 2 dominates. In Phase II, for each of the (at most $F$) unassigned flows, the algorithm tries to assign it to a subset of nodes in $\V$, so Phase II has a complexity of $O(FV)$. The overall complexity of the MCA algorithm is $O(F^2V^2)$ as Phase I dominates. For the GCA algorithm, in Phase I, the sorting operation has a complexity of $O(F\log F)$, and assigning each flow to only one VNF-node has a complexity of $O(FV)$. Phase II is the same as that of the MCA algorithm, which has a complexity of $O(FV)$. Therefore, the complexity of the GCA algorithm is $O(FV)$. 

    \end{appendices}

 \begin{IEEEbiography}[{\includegraphics[width=1in,height=1.25in,clip,keepaspectratio]{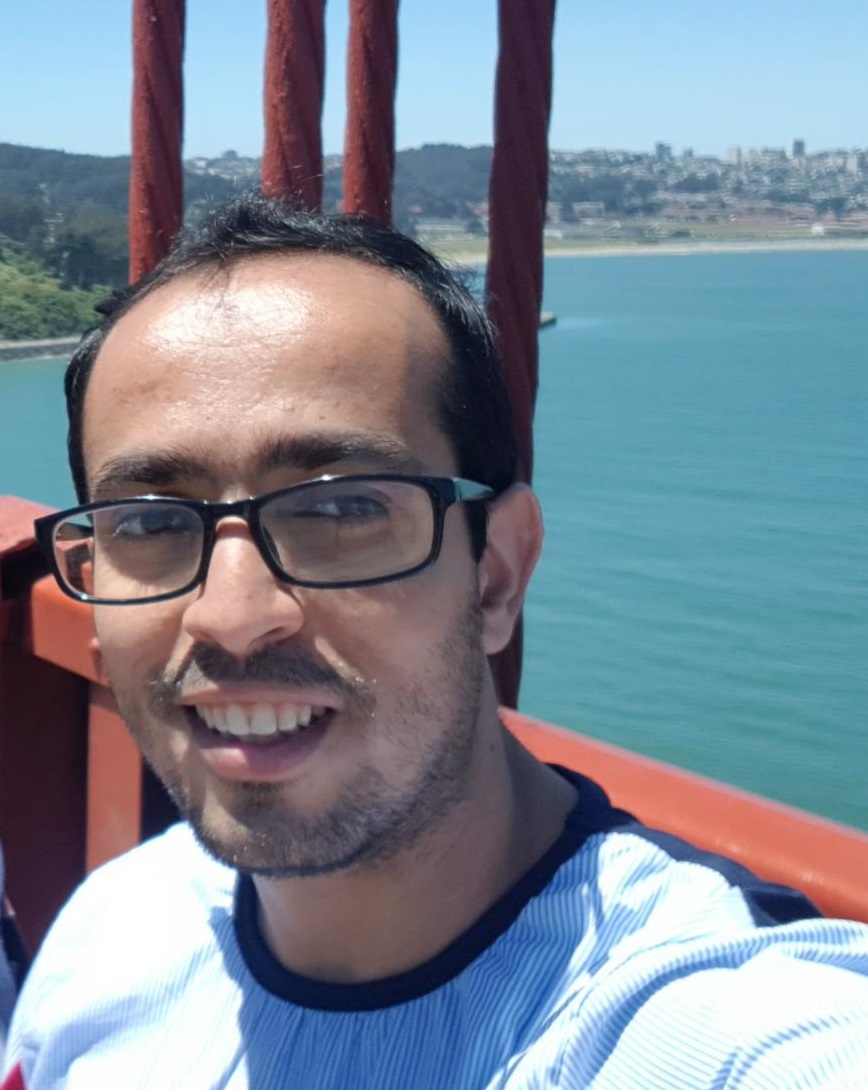}}]{Gamal Sallam}
 obtained the B.Sc. degree in Information Technology and Systems from Cairo University in 2011, the M.Sc. degree in Computer Networks from King Fahd University of Petroleum and Minerals in 2016, and the PhD degree in Computer Science from Temple University in 2020. His current research interests include resource allocation in network function virtualization. He received the Outstanding Research Assistant Award from the Department of Computer and Information Sciences and from the College of Science and Technology of Temple University in 2018 and 2019, respectively. 
 \end{IEEEbiography}


\begin{IEEEbiography}[{\includegraphics[width=1in,height=1.25in,clip,keepaspectratio]{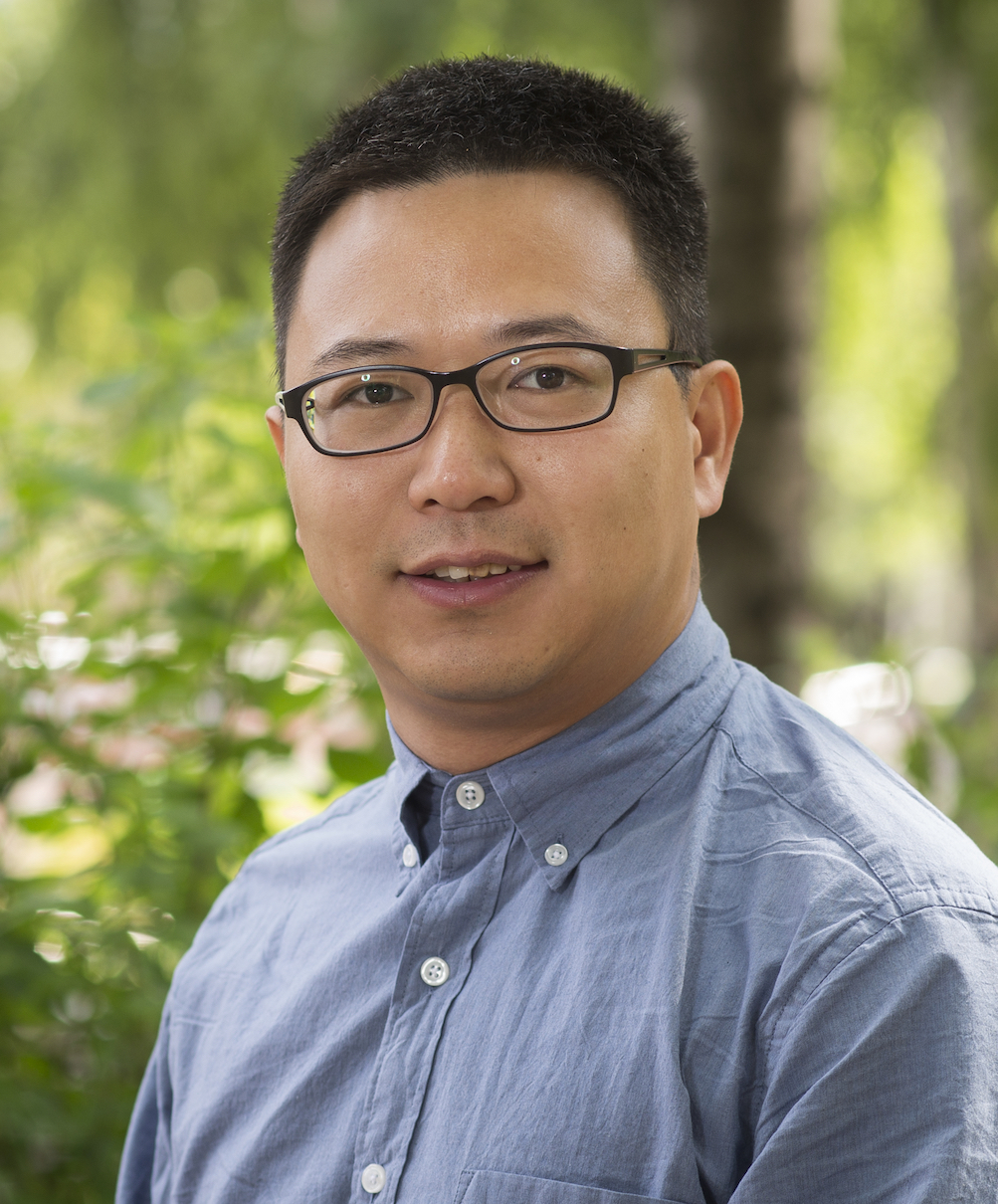}}]{Bo Ji}(S'11-M'12-SM'18)
received his B.E. and M.E. degrees in Information Science and Electronic Engineering from Zhejiang University, Hangzhou, China, in 2004 and 2006, respectively, and his Ph.D. degree in Electrical and Computer Engineering from The Ohio State University, Columbus, OH, USA, in 2012. Dr. Ji is an Associate Professor in the Department of Computer Science at Virginia Tech, Blacksburg, VA, USA. Prior to joining Virginia Tech, he was an Associate/Assistant Professor in the Department of Computer and Information Sciences at Temple University from July 2014 to July 2020. He was also a Senior Member of the Technical Staff with AT\&T Labs, San Ramon, CA, from January 2013 to June 2014. His research interests are in the modeling, analysis, control, and optimization of computer and network systems, such as wired and wireless networks, large-scale IoT systems, high performance computing systems and data centers, and cyber-physical systems. He currently serves on the editorial boards of the IEEE/ACM Transactions on Networking, IEEE Transactions on Network Science and Engineering, IEEE Internet of Things Journal, and IEEE Open Journal of the Communications Society. Dr. Ji is a senior member of the IEEE and a member of the ACM. He is a National Science Foundation (NSF) CAREER awardee (2017) and an NSF CISE Research Initiation Initiative (CRII) awardee (2017). He is also a recipient of the IEEE INFOCOM 2019 Best Paper Award.
\end{IEEEbiography}

\end{document}